\journalname{}
\begin{document}

\title{Fast and Accurate Mining of Correlated Heavy Hitters
}


\author{Italo~Epicoco \and
		Massimo~Cafaro \and
        Marco~Pulimeno
}

\authorrunning{I. Epicoco et al.} 

\institute{I. Epicoco \at University of Salento \\
                     \email{italo.epicoco@unisalento.it} \and
           M. Cafaro \at University of Salento \\
                     \email{massimo.cafaro@unisalento.it} \and        
           M. Pulimeno \at University of Salento \\
                     \email{marco.pulimeno@unisalento.it}
}


\maketitle

\begin{abstract}
The problem of mining Correlated Heavy Hitters (CHH) from a two-dimensional data stream has been introduced recently, and a deterministic algorithm based on the use of the Misra--Gries algorithm has been proposed by Lahiri et al. to solve it. In this paper we present a new counter-based algorithm for tracking CHHs, formally prove its error bounds and correctness and show, through extensive experimental results, that our algorithm outperforms the Misra--Gries based algorithm with regard to accuracy and speed whilst requiring asymptotically much less space. 

\keywords{Data stream mining \and Correlation \and Heavy hitters}
\end{abstract}

\section{Introduction}
\label{intro}

Mining heavy hitters (also called frequent items), is a well known and studied data mining task. Algorithms for detecting heavy hitters in a data stream can be classified as being either \emph{counter} or \emph{sketch} based, depending on their main data structure. Counter-based algorithms rely on a fixed number of counters in order to keep track of stream items. 
Sketch-based algorithms, as their name suggests, monitor the input data stream by using a set of counters which are stored in a sketch data structure, tipically a two-dimensional array. In this case, hash functions map items to their corresponding sketch cells. Counter-based algorithms are deterministic and sketch-based ones are randomized, thus providing a probabilistic guarantee. 

Regarding the counter-based algorithms, the first sequential algorithm has been designed by Misra and Gries \cite{Misra82}. About twenty years later, the same algorithm was rediscovered independently by both Demaine et al. \cite{DemaineLM02} (the so-called \emph{Frequent} algorithm) and by Karp et al. \cite{Karp}. Among the recently developed counter-based algorithms we recall here \emph{Sticky Sampling}, \emph{Lossy Counting} \cite{Manku02approximatefrequency} and \emph{Space Saving} \cite{Metwally2006}. Notable examples of sketch-based algorithms are \emph{CountSketch} \cite{Charikar}, \emph{Group Test} \cite{Cormode-grouptest}, \emph{Count-Min} \cite{Cormode05} and \emph{hCount} \cite{Jin03}.

Regarding parallel algorithms, \cite{cafaro-tempesta,Cafaro-Pulimeno-Tempesta} provide message-passing based parallel versions of the Frequent and Space Saving algorithms. Shared-memory algorithms have been designed as well, and we recall here the parallel version of Frequent \cite{Zhang2013}, the parallel version of Lossy Counting \cite{Zhang2012} and the parallel versions of Space Saving \cite{Roy2012,Das2009}. Novel shared-memory parallel algorithms for frequent items were recently proposed in \cite{Tangwongsan2014}. Accelerator based algorithms for frequent items exploiting a GPU (Graphics Processing Unit) include \cite{Govindaraju2005,Erra2012}. 

The problem of mining Correlated Heavy Hitters has been introduced recently \cite{Lahiri2016}  by Lahiri et al. Data mining problems that require to compute correlated heavy hitters may be found in the context of network monitoring and management, as well as anomaly and intrusion detection. As an example, consider the stream of pairs (source address, destination address) of IP packets passing through a router. Identifying the nodes that are responsible for the majority of the traffic through that router (frequent items over a single dimension) could be useful, but it is also interesting to discover, for all of the frequent sources, what are the destinations that receive the majority of connections by the same source. Important sources are detected as frequent items over the first dimension, then we search for the frequent destinations in the context of each one of those sources, i.e., the correlated heavy hitters of the stream. In order to formally state the problem, we recall here preliminary notation and definitions.

\begin{definition}
The frequency $f_{xy}$ of the tuple $(x,y)$ in the stream $\sigma$ is given by $f_{xy} = | \{(i,j) \in \sigma : (x=i) \land (y=j)\} |$.
\end{definition}

\begin{definition}
The frequency $f_x$ of an item which appears as first element in the tuple $(x,y)$ is given by $f_x = | \{(i,j) \in \sigma : (x=i)\} |$. 
\end{definition}

The frequency $f_x$ refers to the frequency of the item $x$ disregarding the second item belonging to the tuple, i.e., the frequency computed when considering the sub-stream induced by the projection of the tuples on the first dimension, which is also referred to as the primary dimension (whose items are referred to as primary items). We are now ready to state the Exact Correlated Heavy Hitters problem.

\begin{problem} Exact Correlated Heavy Hitters problem.
\label{exact-CHH}

In the online setting, given a data stream $\sigma$ of length $N$ made of $(x, y)$ tuples in which the item $x$ is drawn from the universe $\mathcal{U}_1=\{1,...,m_1\}$ and the item $y$ is drawn from the universe $\mathcal{U}_2=\{1,...,m_2\}$, two user-defined thresholds $\phi_1$ and $\phi_2$ such that $0<\phi_1<1$ and $0<\phi_2<1$, the Exact Correlated Heavy Hitters (ECHH) problem requires determining all of the $(x,y)$ tuples such that:

\begin{equation}
f_x > \phi_1 N
\label{frequent-x}
\end{equation}

\noindent and

\begin{equation}
f_{xy} > \phi_2 f_x.
\label{frequent-y-correlated-to-x}
\end{equation}
\end{problem}

The Exact Correlated Heavy Hitters problem can not be solved using limited space and only one pass through the input stream, hence the Approximate Correlated Heavy Hitters problem (ACHH) is introduced \cite{Lahiri2016}. We state the problem as follows.

\begin{problem} Approximate Correlated Heavy Hitters problem.
\label{approx-CHH}

Given a data stream $\sigma$ of length $N$ made of $(x, y)$ tuples in which the item $x$ is drawn from the universe $\mathcal{U}_1=\{1,...,m_1\}$ and the item $y$ is drawn from the universe $\mathcal{U}_2=\{1,...,m_2\}$, two user-defined thresholds $\phi_1$ and $\phi_2$ such that $0<\phi_1<1$ and $0<\phi_2<1$ and two error bounds $\epsilon_1$ and $\epsilon_2$ such that $0 < \epsilon_1 < \phi_1$ and $0 < \epsilon_2 < \phi_2$, the Approximate Correlated Heavy Hitters (ACHH) problem requires determining all of the primary items $x$ such that 

\begin{equation}
f_x > \phi_1 N
\end{equation}

\noindent and no items with 

\begin{equation}
f_x \le (\phi_1 - \epsilon_1)N
\end{equation}

\noindent should be reported; moreover, we are required to determine for each frequent primary candidate $x$, all of the tuples $(x,y)$ such that 

\begin{equation}
f_{xy} > \phi_2 f_x
\end{equation}

\noindent and no tuple $(x,y)$ such that 

\begin{equation}
f_{xy} \le (\phi_2 - \epsilon_2) f_x
\end{equation}

\noindent should be reported. 
\end{problem}

In this paper we present CSSCHH (Cascading Space Saving Correlated Heavy Hitters) a new counter-based algorithm for tracking CHHs in a two-dimensional data stream and solving the ACHH problem, formally prove its error bounds and correctness and show, through extensive experimental results, that our algorithm outperforms the Misra--Gries based algorithm \cite{Lahiri2016} proposed by Lahiri et al. (from now on called MGCHH) with regard to accuracy and speed whilst requiring asymptotically much less space.

The rest of this manuscript is organized as follows. In Section \ref{related_work} we provide an overview of related work, we recall in Section \ref{chh-alg} the MGCHH algorithm introduced in \cite{Lahiri2016} and present our algorithm in Section \ref{alg}. Then, we formally prove its error bound and correctness in Section \ref{correctness}. Next, we analyze our algorithm worst case time and space complexity in Section \ref{analysis}. We compare, from a theoretical perspective, CSSCHH against MGCHH in Section \ref{theory}. Extensive experimental results are reported and discussed in Section \ref{results}. We draw our conclusions in Section \ref{conclusions}.

\section{Related work}
\label{related_work}
The problem of efficiently analyzing two-dimensional data streams in order to gain insights and compute significant statistics has been largely investigated in many forms. 

Gehrke et al. \cite{Gehrke2001}, Ananthakrishna et al.\cite{ananthakrishna2003} and Cormode et al. \cite{cormode2009} refer to the notion of \textit{correlated aggregates} and present solutions tailored to different  contexts. On a two-dimensional stream, i.e., a stream of items' pairs, a correlated aggregate is an aggregate value computed along the second dimension on a set of pairs defined by a particular constraint on the first dimension. A typical correlated aggregate is, for instance, the average value of the items on the second dimension computed for those pairs such that the frequency of the first dimension item is above a fixed threshold. 

Mining heavy hitters can also be applied to streams of pairs, when a pair is regarded as a single item. The problem requires finding all of the items which appear in the stream with a frequency greater than a threshold. \cite{Manerikar2009,Cormode2009freq} present an overview and comparison of the most common frequent items algorithms, while Zhang et al. in \cite{zhang2004} treat the case of multidimensional and hierarchical heavy hitters in the context of network traffic analysis. 

A stream of items' tuples can also be processed in order to find frequent itemsets, i.e., a set of items which appear in the stream with a frequency above a threshold. In the context of streams of pairs we are interested in finding the frequent two-itemsets. Also, strictly correlated is the notion of association rules, which are implications of the type \emph{first item} $\implies$ \emph{second item}; the problem entails searching for pairs of items that are frequent two-itemsets and such that the frequency of the second item  with regard to the number of occurrences of the first item is above a fixed threshold. Several frequent itemsets algorithm in the offline setting have been proposed, we recall here Apriori \cite{agrawal1993mining}, Eclat \cite{zaki2000scalable} and FPGrowth \cite{han2000mining}, while an overview of streaming algorithms that solve the frequent itemsets problem is given in \cite{Cheng2008}.

In \cite{Mirylenka2015}, Mirylenka et al. introduce the notion of \textit{Conditional Heavy Hitters} and compare it with other related problems, such as association rules and Correlated Heavy Hitters, highlighting how solving these problems actually leads to different outputs, each emphasizing particular aspects of the input data stream. A group of algorithms is proposed and experimentally evaluated with respect to the approximate mining of Conditional Heavy Hitters. One of these, FamilyHH, is based on the same approach we use in our algorithm, but the authors conclude that the algorithm is not particularly suitable for that problem. 

Lahiri et al. \cite{Lahiri2016} introduced the notion of Correlated Heavy Hitters and proposed an approximate solution based on the Misra--Gries algorithm. We will refer to the Misra and Gries algorithm as MG, and to the their CHH algorithm as MGCHH. In this paper, we present a new counter-based algorithm for tracking CHHs, with reference to the original problem formulation introduced in \cite{Lahiri2016}. Therefore, we compare our solution against MGCHH. In the next section, we recall the MGCHH algorithm.

\section{The Misra--Gries based CHH algorithm}
\label{chh-alg}

This CHH algorithm, recently introduced in \cite{Lahiri2016} by Lahiri et al., is based on a nested application of the Misra and Gries algorithm \cite{Misra82}. Before delving into the details of MGCHH, we recall first how MG works. Being counter-based, MG keeps track of stream items by using a data structure holding $k$ counters, i.e., pairs (item, frequency). In particular, given a stream $\sigma$ of length $N$, a support threshold $\phi$ to determine frequent items (i.e., those items whose frequency exceeds $\phi N$), and an error threshold $0 < \epsilon < 1$, MG requires at least $k = \frac{1}{\epsilon}$ counters to estimate the frequencies of the items with an error less than or equal to $\epsilon N$.

The MG algorithm works as follows. Upon receiving an item from the stream, if one of the counters is already monitoring the item, then the counter is updated by increasing by one the frequency of the item. If none of the counters is monitoring the item but there is a counter available in the data structure (i.e., a counter which is not monitoring any item), this counter is then assigned the responsibility of monitoring the received item and its corresponding frequency is set to one. Otherwise, all of the counters in the data structure are already in charge of monitoring an item. In this case, since the number of counters can not exceed $k$, all of the counters' frequencies are decremented by one. As a result, the counters whose frequencies after the decrement are reset to zero become available to monitor incoming items (since their items are discarded). It is worth recalling here that MG underestimates the frequency of an item; therefore, a single pass of the algorithm over the stream is not enough for exact identification of frequent items.   

MGCHH is a single-pass algorithm solving Problem~\ref{approx-CHH} as follows. Basically, the algorithm estimates the frequencies of the items occurring in the stream along the primary dimension using a set of counters (primary counters) updated as in the MG algorithm; moreover, another set of counters (secondary counters) is associated to each primary counter to keep track of the frequent items occurring along the secondary dimension and correlated to the primary item. In other words, for each distinct item $d$ along the primary dimension, the algorithm maintains its frequency estimate $\hat{f}_d$ and an embedded MG set of secondary counters related to the sub-stream $\sigma_d$ induced by $d$: $\sigma_d = \{y | (d,y) \in \sigma \}$. The embedded secondary counters are a set of pairs $(s, \hat{f}_{d,s})$, where $s$ is an item occurring in $\sigma_d$ and ${\hat{f}}_{d,s}$ estimates the frequency of the tuple $(d,s)$ in $\sigma$. Alternatively, ${\hat{f}}_{d,s}$ can be seen as the frequency estimate of $s$ in $\sigma_d$. Therefore, the MGCHH actions on ${\hat{f}}_d$ are driven by the item $d$ occurring in $\sigma$ whilst the actions on ${\hat{f}}_{d,s}$ depend instead on the item $s$ occurring in $\sigma_d$.

The MGCHH main data structure is a table $H$, a set of tuples of the form $(d,{\hat{f}}_d, H_d)$, where $d$ is an item along the primary dimension, ${\hat{f}}_d$ is its estimated frequency and $H_d$ is a secondary table which stores the secondary items  occurring with $d$. The $H_d$ table maintains a set of counters (item, frequency) monitoring items occurring along the secondary dimension; for each secondary item $s$, its frequency is determined with regard to $\sigma_d$ and denoted by ${\hat{f}}_{d,s}$.

Let $s_1$ and $s_2$ be respectively the maximum number of primary counters in $H$ and the maximum number of secondary counters in each $H_d$. In MGCHH $s_1$ and $s_2$ depend on the parameters $\phi_1,\phi_2, \epsilon_1,\epsilon_2$, and are set at the beginning of the algorithm. 

When the algorithm starts, its data structures are initialized. The number of counters $s_1$ (for tracking the primary items) and $s_2$ (for tracking correlated items) is selected in order to solve the ACHH problem within the $\epsilon_1$ and $\epsilon_2$ error bounds; moreover, $s_1$ and $s_2$ are also chosen to minimize the total space required, as discussed in \cite{Lahiri2016}. In practice, letting $\alpha = \frac{1+\phi_2}{\phi_1-\epsilon_1}$, if $\epsilon_1 \geq \frac{\epsilon_2}{2 \alpha}$, then MGCHH initializes $s_1 = \frac{2 \alpha}{\epsilon_2}$ counters in order to keep track of the primary frequent items and $s_2 = \frac{2}{\epsilon_2}$ counters to track correlated frequent items; otherwise, if $\epsilon_1 < \frac{\epsilon_2}{2 \alpha}$, then MGCHH sets $s_1 = \frac{1}{\epsilon_1}$ and $s_2 = \frac{1}{\epsilon_2 - \alpha \epsilon_1}$.

 Upon receiving a tuple $(x,y)$ from the stream, the data structures are updated as needed. Depending on the tuple $(x,y)$, the update process works as follows:
 
 \begin{enumerate}
\item
If $x$ is in $H$ (i.e., it is already monitored), and $y$ is in $H_x$ as well, then both ${\hat{f}}_{x}$ and ${\hat{f}}_{x,y}$ are incremented.

\item
If $x$ is in $H$, but $y$ is not in $H_x$ (i.e., it is not monitored) and there is an available counter, then $y$ is added to $H_x$ and its frequency is initialized to one. If no counter is available (i.e., $|H_x| = s_2$), then each counter in $H_x$ is decremented by one. If the frequency of any monitored item goes to zero, the item is evicted from $H_x$. After this operation, the size of $H_x$ is such that $|H_x| \leq s_2$.

\item
If $x$ is not in $H$, and $|H| < s_1$ then a counter is created for $x$ in $H$ setting its frequency ${\hat{f}}_{x}$ to one, and initializing $H_x$ with the counter $(y,1)$. If $|H| = s_1$, then for each monitored item $d \in H$, its frequency $\hat f_d$ is decremented by one; if this decrement causes ${\hat{f}}_d$ to become zero, then the counter monitoring $d$ is discarded and removed from $H$. Otherwise, an arbitrary item $s$ is randomly selected from $H_d$ such that ${\hat{f}}_{d,s} > 0$ and ${\hat{f}}_{d,s}$ is decremented by one. If ${\hat{f}}_{d,s}$ goes to zero, the item $s$ is discarded from $H_d$. This further decrement guarantees that the sum of the $\hat f_{d,s}$ frequencies within $H_d$ is less than or equal to $\hat f_d$.
\end{enumerate}

Finally, in order to report the CHHs in the stream, a query can be posed to the data structures as follows. For each primary item $d \in H$, if ${\hat{f}}_d \ge ({\phi}_1- \frac{1}{s_1})N$ then the algorithm searches for the secondary items $s \in H_d$ such that ${\hat{f}}_{d,s} \ge ({\phi}_2- \frac{1}{s_2}){\hat{f}}_d - \frac{N}{s_1}$ and returns the corresponding $(d,s)$ tuples.

\section{A Space Saving based algorithm}
\label{alg}

Our Cascading Space Saving Correlated Heavy Hitters (CSSCHH) algorithm exploits the basic ideas of the Space Saving algorithm \cite{Metwally2006}, combining two Space Saving stream summaries for tracking the primary item frequencies and the tuple frequencies. We refer to our algorithm as cascading Space Saving since it is based on the use of two distinct and independent applications of Space Saving. 

It is worth noting here immediately that it is not possible to follow the same approach used in MGCHH, i.e., a nested application of Space Saving in place of Misra--Gries. The reason is that upon arrival of a primary item which is not monitored, if the primary summary is already full then Space Saving must evict the item monitored by the counter with the minimum frequency and replace it with the newly arrived item, incrementing the counter. However, in the nested approach, substituting the item also requires updating the embedded secondary summary but, in the Space Saving case, we can not reuse the secondary summary because it refers to the evicted item, not to the newly arrived one.

Therefore, we use two independent Space Saving stream summaries as data structures. The first, denoted by $\mathcal{S}^p$, and referred to as the primary stream summary, monitors a subset of primary items which appears in the stream through the use of $k_1$ distinct counters. The second, denoted by $\mathcal{S}^t$, includes $k_2$ counters and monitors a subset of the tuples which appear in the stream. 

The counters are updated in order to accurately estimate the items' frequencies and a lightweight data structure is exploited to keep the elements sorted by their estimated frequencies. A detailed description of the Space Saving algorithm is given in \cite{Metwally2006}. Here, we briefly recall how Space Saving works and its main properties. 

A stream summary $\mathcal{S}$ is a data structure used to monitor $k$ distinct items and includes $k$ counters. We denote with $c_j$ the $j$th counter and by $c_j.i$ and $c_j.f$ respectively the item monitored by the $j$th counter and its corresponding estimated frequency. When processing an item which is already monitored by a counter, its estimated frequency is incremented by one. When processing an item which is not monitored, there are two possibilities. If a counter is available, it will be in charge of monitoring the item and its estimated frequency is set to one. Otherwise, if all of the counters are already occupied (their frequencies are different from zero), the counter storing the item with minimum frequency is incremented by one. Then the monitored item is replaced by the new item. This is because since an item which is not monitored can not have a frequency greater than the minimal frequency. The complexity of the Space Saving update procedure is $O(1)$ in the worst case, as proved by its authors. 


Let $N$ be the length of the input stream, $\sum_{c_i \in \mathcal{S}} c_i.f$ the sum of the counters in $\mathcal{S}$, $k=|\mathcal{S}|$ the number of counters in $\mathcal{S}$, $f_v$ the exact frequency of an item $v$, $\hat f_v$ its estimated frequency and $\hat f^{min}$ the minimum frequency in $\mathcal{S}$. Then, the following relations hold for Space Saving:

\begin{equation}
	\begin{aligned}
	&\sum_{c_i \in \mathcal{S}} c_i.f = N, \\
	&\hat f_v - f_v \leq \hat f^{min} \leq \frac{N}{k}.
	\end{aligned}
\end{equation}

Our CSSCHH algorithm starts by initializing the $\mathcal{S}^p$ primary stream summary data structure allocating $k_1$ counters and the correlated $\mathcal{S}^t$ stream summary allocating $k_2$ counters. We shall explain in Section \ref{correctness} how exactly the values of $k_1$ and $k_2$ are derived. Algorithm \ref{CSSCHH Init} presents the pseudocode related to the initialization phase of CSSCHH.

\begin{algorithm}
\begin{algorithmic}[1]
\Require Threshold for primary itmes ${\phi}_1$; threshold for correlated items ${\phi}_2$; tolerance for primary items ${\epsilon}_1$;
      tolerance for correlated items ${\epsilon}_2$.
\Ensure Properly initialized $\mathcal{S}^p$ and $\mathcal{S}^t$ stream summaries

\Procedure {CSSCHH-Init}{${\phi}_1,{\phi}_2,{\epsilon}_1,{\epsilon}_2$}

\State $\beta \leftarrow \frac{1}{\epsilon_2 \phi_1}$ 
\State $\gamma \leftarrow \frac{\epsilon_2 + \phi_2}{\epsilon_2 \phi_1}$
\State $k_1 \leftarrow \max\left\{\frac{1}{\epsilon_1}, \gamma + \sqrt{\beta \gamma}\right\}$
\State $k_2 \leftarrow \beta\frac{k_1}{k_1 - \gamma}$
\State Allocate $k_1$ counters for $\mathcal{S}^p$
\State Allocate $k_2$ counters for $\mathcal{S}^t$
\State \Return $\mathcal{S}^p$ and $\mathcal{S}^t$
\EndProcedure
\caption{CSSCHH Init}
\label{CSSCHH Init}
\end{algorithmic}
\end{algorithm}

In CSSCHH algorithm, a tuple $(x,y)$ is processed by updating two stream summaries, $\mathcal{S}^p$ and $\mathcal{S}^t$. The primary item $x$ is used to update the primary stream summary $\mathcal{S}^p$; the tuple $(x,y)$ is also considered as a single item and it is used to update the correlated stream summary $\mathcal{S}^t$. Since for each tuple in the stream both stream summaries are updated by means of the Space Saving update procedure, we inherit its properties. Let $f_x$ and $\hat f_x$ denote the exact and estimated frequency of the primary item $x$, and let $f_{xy}$ and $\hat f_{xy}$ denote the exact and estimated frequency of the tuple $(x,y)$; moreover, denoting by $c^p_i$ and $c^t_i$ the $i$th counter in the primary and in the correlated stream summary, and denoting by $\hat {f^p}^{min}$ and $\hat {f^t}^{min}$ the minimum frequency in the primary and correlated stream summary, the following relations hold:

\begin{equation}
	\begin{aligned}
	&\sum_{c^p_i \in \mathcal{S}^p} c^p_i.f = N, \\
	&\hat f_x - f_x \leq \hat {f^p}^{min} \leq \frac{N}{k_1}, \\
	\end{aligned}
\end{equation}

\begin{equation}
	\begin{aligned}
	&\sum_{c^t_i \in \mathcal{S}^t} c^t_i.f = N, \\
	&\hat f_{xy} - f_{xy} \leq \hat {f^t}^{min} \leq \frac{N}{k_2}.
	\end{aligned}
\end{equation}

 The update procedure of CSSCHH is presented in Algorithm~\ref{CSSCHH Update}. 

\begin{algorithm}
\begin{algorithmic}[1]
\Require $x, y$, the items of a tuple.
\Ensure Update of $\mathcal{S}^p$ and $\mathcal{S}^t$ stream summaries.
\Procedure {CSSCHH-Update}{$\mathcal{S}^p, \mathcal{S}^t, x, y$}
	\State \Call{SpaceSavingUpdate}{$\mathcal{S}^p$, $x$}
	\State \Call{SpaceSavingUpdate}{$\mathcal{S}^t, (x, y)$}
\EndProcedure
\caption{CSSCHH Update}
\label{CSSCHH Update}
\end{algorithmic}
\end{algorithm}

In order to retrieve the correlated heavy hitters, a query is posed to both stream summaries. The query procedure internally uses two lists, $F$ and $C$. The former stores primary items and their estimated frequencies $(r, \hat f_r)$. The latter stores CHHs $(r, s, \hat f_{rs})$ in which $r$ is a primary frequent item, $s$ the correlated frequent item candidate and $\hat f_{rs}$ the estimated frequency of the tuple $(r, s)$.

The query algorithm inspects all of the $k_1$ counters in the $\mathcal{S}^p$ stream summary. If the frequency of the monitored item is greater than the selection criterion (i.e., $c^p_j.f > \phi_1 N$), then we add the monitored item $r = c^p_j.i$ and its estimated frequency $\hat f_r = c^p_j.f$ to $F$. 

The algorithm inspects now all of the $k_2$ counters of the $\mathcal{S}^t$ stream summary. The monitored items in $\mathcal{S}^t$ are the tuples $(r,s)$. We check if the primary item $r$ is a primary frequent item candidate (i.e., if $r \in F$); if this condition is true and the tuple estimated frequency is greater than the selection criterion (i.e., $c^t_j.f > \phi_2 (\hat f_r - \frac{N}{k_1})$), then the triplet $(r, s, \hat f_{rs})$ is added to $C$. The Query procedure is presented as Algorithm~\ref{CSSCHH Query}. 

\begin{algorithm}
\begin{algorithmic}[1]
\Require $\mathcal{S}^p$ and $\mathcal{S}^t$ stream summaries.
\Ensure Set of correlated frequent items $C$ 
\Procedure {CSSCHH-Query}{$\mathcal{S}^p$, $\mathcal{S}^t$}
\State $F \leftarrow \emptyset$
\ForAll{$c^p_j \in \mathcal{S}^p$}
	\State $r \leftarrow c^p_j.i$;  $\hat f_r \leftarrow c^p_j.f$
	\If{$\hat f_r  > \phi_1 N$}	
		\State $F \leftarrow F \cup \{(r, \hat f_r)\}$
	\EndIf
\EndFor
\ForAll{$c^t_j \in \mathcal{S}^t$}
	\State $(r,s) \leftarrow c^t_j.i$; $\hat f_{rs} \leftarrow c^p_j.f$
	\If{$r \in F \land (\hat f_{rs}  > \phi_2 (\hat f_r - \frac{N}{k_1}))$}
		\State $C \leftarrow C \cup \{(r, s, \hat f_{rs})\}$
	\EndIf
\EndFor
\State \Return $C$
\EndProcedure
\caption{CSSCHH Query}
\label{CSSCHH Query}
\end{algorithmic}
\end{algorithm}

\section {Correctness}
\label{correctness}
We are going to formally prove the correctness of our algorithm. The main results of this section are the following two theorems. 

\begin{theorem}
\label{thm_primfreq}
The CSSCHH algorithm reports all of the primary items $x$ whose exact frequency $f_x$ is greater than the threshold, i.e., $f_x > \phi_1 N$ and no items whose exact frequency is such that $f_x \le (\phi_1 - \frac{1}{k_1})N$.
\end{theorem}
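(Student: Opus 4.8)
The plan is to leverage directly the Space Saving guarantees that the paper has already established for the primary stream summary $\mathcal{S}^p$, namely the two relations $\hat f_x - f_x \leq \hat{f^p}^{min} \leq \frac{N}{k_1}$ together with the fact that the sum of the primary counters equals $N$. The query procedure (Algorithm \ref{CSSCHH Query}) reports a primary item $x$ exactly when it is monitored by some counter $c^p_j$ with $c^p_j.f = \hat f_x > \phi_1 N$, so the theorem splits naturally into two claims: a \emph{completeness} claim (every truly frequent $x$ is reported) and a \emph{soundness} claim (no item with $f_x \leq (\phi_1 - \frac{1}{k_1})N$ is reported).

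For completeness, I would argue that every primary item $x$ is monitored by a counter in $\mathcal{S}^p$ at query time, or at least that its estimate satisfies $\hat f_x \geq f_x$; the Space Saving invariant guarantees precisely $\hat f_x \geq f_x$ (the estimate never underestimates, since the inequality $\hat f_x - f_x \leq \hat{f^p}^{min}$ has a nonnegative left-hand side only because Space Saving overestimates). Hence if $f_x > \phi_1 N$, then $\hat f_x \geq f_x > \phi_1 N$, which is exactly the reporting condition $\hat f_r > \phi_1 N$ on line 5; so $x$ is reported. The only subtlety here is confirming that a truly frequent item is always present in the summary rather than having been evicted — but this follows from the standard Space Saving property that an item with frequency exceeding $\frac{N}{k_1}$ (and $\phi_1 N > \frac{N}{k_1}$ since $k_1 \geq \frac{1}{\epsilon_1} > \frac{1}{\phi_1}$) must occupy a counter, combined with $\hat f_x \geq f_x$.

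For soundness, I would take the contrapositive: suppose $x$ is reported, so $\hat f_x > \phi_1 N$. Using the error bound $\hat f_x - f_x \leq \frac{N}{k_1}$, I get
\begin{equation}
f_x \geq \hat f_x - \frac{N}{k_1} > \phi_1 N - \frac{N}{k_1} = \left(\phi_1 - \frac{1}{k_1}\right)N.
\end{equation}
Therefore any reported item has $f_x > (\phi_1 - \frac{1}{k_1})N$, which is the same as saying no item with $f_x \leq (\phi_1 - \frac{1}{k_1})N$ is reported. This is the crux of the error-bound half and is a direct one-line consequence of the Space Saving estimate guarantee.

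The main obstacle, such as it is, lies not in the algebra but in justifying the completeness direction rigorously — specifically that a primary item with $f_x > \phi_1 N$ genuinely resides in $\mathcal{S}^p$ at the moment of the query. I would handle this by invoking the Space Saving property that the minimum counter value $\hat{f^p}^{min}$ bounds the over-estimation of any \emph{evicted} item: if $x$ had been evicted, its last recorded estimate would have been at most the then-current minimum, which is at most $\frac{N}{k_1} < \phi_1 N$, contradicting $f_x > \phi_1 N \geq$ (that evicted estimate, which is an upper bound on $f_x$). Assembling these two inequalities with the reporting threshold on line 5 closes both directions and completes the proof.
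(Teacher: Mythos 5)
Your proof is correct and takes essentially the same route as the paper's: completeness follows from the Space Saving overestimation property $\hat f_x \geq f_x$, and the false-positive bound follows from $\hat f_x - f_x \leq \frac{N}{k_1}$ (you state this as a contrapositive, the paper argues directly from $f_x + \frac{N}{k_1} \leq \phi_1 N$, but the content is identical). Your added care in verifying that a truly frequent item actually resides in $\mathcal{S}^p$ at query time is a detail the paper leaves implicit, and your justification of it via $f_x > \phi_1 N > \frac{N}{k_1} \geq \hat{f^p}^{min}$ is sound.
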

 
\begin{proof}
The algorithm determines all of the primary frequent candidates through the selection criterion $\hat f_x > \phi_1 N$. Since the stream summary provides an overestimation of the frequency $\hat f_x \geq f_x$, if the exact frequency of an item is greater than the threshold, its estimated frequency will be greater as well: $\hat f_x \geq f_x > \phi_1 N$, hence the item will be selected and this proves the first part of the theorem. 

The second part of the theorem states that, given an item $x$, if its exact frequency is $f_x \le (\phi_1 - \frac{1}{k_1})N$, which can be  rewritten as 
\begin{equation}
\label{eqn_falsepositive}
f_x + \frac{1}{k_1}N \le \phi_1N,
\end{equation}
\noindent then the item will not be selected; hence we must prove that its estimated frequency is less than $\hat f_x \leq \phi_1 N$. By using the Space Saving properties we know that the estimate error provided by the stream summary $\mathcal{S}^p$ is bounded by $\frac{N}{k_1}$:

\begin{equation}
	\begin{aligned}
	\hat f_x - f_x \leq \frac{1}{k_1} N \Rightarrow \hat f_x \leq f_x + \frac{1}{k_1} N,\\
	\end{aligned}
\end{equation}

\noindent hence, by using Eq.~(\ref{eqn_falsepositive}) we have 
\begin{equation}
	\begin{aligned}
	\hat f_x \leq f_x + \frac{1}{k_1} N \le \phi_1 N,
	\end{aligned}
\end{equation}

\noindent which proves the theorem. Moreover, this theorem also implies that for all of the primary frequent candidates it holds that:

\begin{equation}
	\begin{aligned}
	f_x > (\phi_1 - \frac{1}{k_1})N.
	\end{aligned}
\end{equation}
 
\end{proof} \qed

\begin{theorem}
\label{thm_corrfreq}
All of the tuples $(x,y)$ with the item $x$ reported as primary frequent candidate and with exact frequency $f_{xy}$ greater than the threshold ($f_{xy} > \phi_2 f_x$) are reported as correlated heavy hitter candidate. No tuple with a primary item $x$ reported as frequent primary candidate and with exact frequency less than $f_{xy} \le (\phi_2 - \frac{k_2 \phi_2 + k_1}{k_2(k_1 \phi_1 -1)})f_x$ is reported as correlated heavy hitter candidate.
\end{theorem}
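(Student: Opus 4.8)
The plan is to split the argument along the two assertions of the statement and, in both, to lean on the two Space Saving guarantees already recorded for $\mathcal{S}^p$ and $\mathcal{S}^t$ (overestimation together with the $N/k_1$ and $N/k_2$ error bounds) and on the consequence of Theorem~\ref{thm_primfreq} that every reported primary candidate $x$ satisfies $f_x > (\phi_1 - \frac{1}{k_1})N$.

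For the completeness part (no false negatives) I would fix a tuple $(x,y)$ whose primary item $x$ is a reported primary candidate and whose exact frequency obeys $f_{xy} > \phi_2 f_x$, and verify that it passes the secondary selection test $\hat{f}_{xy} > \phi_2(\hat{f}_x - \frac{N}{k_1})$. The chain is short: the overestimation property of $\mathcal{S}^t$ gives $\hat{f}_{xy} \geq f_{xy}$, the hypothesis gives $f_{xy} > \phi_2 f_x$, and the error bound of $\mathcal{S}^p$, namely $\hat{f}_x - f_x \leq \frac{N}{k_1}$, gives $f_x \geq \hat{f}_x - \frac{N}{k_1}$; combining these yields $\hat{f}_{xy} \geq f_{xy} > \phi_2 f_x \geq \phi_2(\hat{f}_x - \frac{N}{k_1})$, which is exactly the query condition, so the tuple is emitted.

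For the soundness part (no distant false positives) I would argue the contrapositive test: assuming $f_{xy} \leq (\phi_2 - \frac{k_2\phi_2 + k_1}{k_2(k_1\phi_1 - 1)})f_x$, show that $\hat{f}_{xy} \leq \phi_2(\hat{f}_x - \frac{N}{k_1})$, so the tuple fails the strict test and is not reported. Writing $\theta$ for the bracketed coefficient, I would upper bound the estimate via the $\mathcal{S}^t$ error, $\hat{f}_{xy} \leq f_{xy} + \frac{N}{k_2} \leq \theta f_x + \frac{N}{k_2}$, and lower bound the threshold via $\hat{f}_x \geq f_x$, giving $\phi_2(\hat{f}_x - \frac{N}{k_1}) \geq \phi_2(f_x - \frac{N}{k_1})$. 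It then suffices to prove $\theta f_x + \frac{N}{k_2} \leq \phi_2(f_x - \frac{N}{k_1})$, i.e. $(\phi_2 - \theta)f_x \geq \frac{N}{k_2} + \frac{\phi_2 N}{k_1}$, since the two intermediate inequalities then splice into $\hat{f}_{xy} \leq \phi_2(f_x - \frac{N}{k_1}) \leq \phi_2(\hat{f}_x - \frac{N}{k_1})$.

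The crux, and the one step I expect to require care, is the algebraic verification that this last inequality is precisely the one secured by Theorem~\ref{thm_primfreq}. Substituting $\phi_2 - \theta = \frac{k_2\phi_2 + k_1}{k_2(k_1\phi_1 - 1)}$ and rewriting the right-hand side as $N\frac{\phi_2 k_2 + k_1}{k_1 k_2}$, the common factor $k_2\phi_2 + k_1$ cancels and the condition collapses to $f_x \geq (\phi_1 - \frac{1}{k_1})N$. Since Theorem~\ref{thm_primfreq} furnishes the strict inequality $f_x > (\phi_1 - \frac{1}{k_1})N$ for every reported primary candidate, the desired bound holds strictly, completing the argument. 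I would note in passing that $k_1\phi_1 > 1$, which follows from the initialization choice $k_1 \geq \frac{1}{\epsilon_1}$ together with $\epsilon_1 < \phi_1$, is what makes $\theta$ well defined and the factor $\phi_2 - \theta$ positive, so that the cancellation and the monotonicity in $f_x$ used above are legitimate.
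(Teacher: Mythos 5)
Your proposal is correct and follows essentially the same route as the paper's proof: the completeness part is the identical chain $\hat{f}_{xy} \geq f_{xy} > \phi_2 f_x \geq \phi_2(\hat{f}_x - \frac{N}{k_1})$, and the soundness part uses the same three ingredients (the $\frac{N}{k_2}$ error bound of $\mathcal{S}^t$, the overestimation $\hat{f}_x \geq f_x$, and the lower bound $f_x > (\phi_1 - \frac{1}{k_1})N$ from Theorem~\ref{thm_primfreq}), merely reorganized into a single sufficient inequality that collapses to the Theorem~\ref{thm_primfreq} bound rather than chained substitutions. Your explicit remark that $k_1\phi_1 > 1$ (hence $k_1\phi_1 - 1 > 0$) justifies the cancellation is a small point the paper leaves implicit, but the arguments are otherwise the same.
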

 
\begin{proof}
The algorithm determines a correlated heavy hitter candidate $(x,y)$ only if the primary item $x$ as been reported as primary frequent item candidate and if its estimated frequency is greater than the selection criterion $\hat f_{xy} > \phi_2 (\hat f_x - \frac{N}{k_1})$. We must prove that those tuples whose exact frequency is greater than the threshold $f_{xy} > \phi_2 f_x$ are reported by the algorithm and hence their estimated frequency is greater than the selection criterion $\hat f_{xy} > \phi_2 (\hat f_x - \frac{N}{k_1})$. If $f_{xy} > \phi_2 f_x$ is true, then $\hat f_{xy} > \phi_2 f_x$ is also true since the stream summary $\mathcal{S}^t$ provides an overestimation of the tuple frequency. Now, since $x$ is reported as primary frequent candidate, from Theorem~\ref{thm_primfreq} we have that $f_x \geq \hat f_x - \frac{N}{k_1}$ and it holds that 

\begin{equation}
	\hat f_{xy} > \phi_2 f_x \geq \phi_2 (\hat f_x - \frac{N}{k_1}).
\end{equation}

Since the frequency estimate is greater than the selection criterion, the tuple will be reported and this proves the first part of the theorem.

The second part of the theorem states that those items with an exact frequency such that $f_{xy} \le (\phi_2 - \frac{k_2 \phi_2 + k_1}{k_2(k_1 \phi_1 -1)})f_x$ will not be reported, hence we must prove that their estimate frequency is less than or equal to the selection criterion i.e., $\hat f_{xy} \leq \phi_2 (\hat f_x - \frac{N}{k_1})$.
Since the algorithm first filters the tuples retaining only the ones whose primary item belongs to the primary frequent item candidates, from Theorem~\ref{thm_primfreq} it follows that for all of the primary frequent candidates it holds that:

\begin{equation}
\label {fx_bound}
	\begin{aligned}
	f_x > (\phi_1 - \frac{1}{k_1}) N \Rightarrow f_x > \frac{\phi_1 k_1 - 1}{k_1}N.
	\end{aligned}
\end{equation}

To prove the theorem we start assuming the exact frequency is such that 

\begin{equation}
	\begin{aligned}
	f_{xy} \le (\phi_2 - \frac{k_2 \phi_2 + k_1}{k_2(k_1 \phi_1 -1)})f_x.
	\end{aligned}
\end{equation}

Due to the $\mathcal{S}^t$ stream summary properties, the error of the tuple frequency estimate is bounded by $\frac{N}{k_2}$, so that $\hat f_{xy} - \frac{N}{k_2} \leq f_{xy}$ and it holds that:

\begin{equation}
	\begin{aligned}
	\hat f_{xy} - \frac{N}{k_2} \leq f_{xy} \le \phi_2 f_x - \frac{k_2 \phi_2 + k_1}{k_2(k_1 \phi_1 -1)} f_x.
	\end{aligned}
\end{equation}

Since the frequency estimate is an overestimation (i.e., $f_x \leq \hat f_x$), 

\begin{equation}
	\begin{aligned}
	\hat f_{xy} - \frac{N}{k_2} \le \phi_2 \hat f_x - \frac{k_2 \phi_2 + k_1}{k_2(k_1 \phi_1 -1)} f_x;
	\end{aligned}
\end{equation}

\noindent using Eq.~(\ref{fx_bound}) we can write:

\begin{equation}
	\begin{aligned}
	\hat f_{xy} - \frac{N}{k_2} &\le \phi_2 \hat f_x - \frac{k_2 \phi_2 + k_1}{k_2(k_1 \phi_1 -1)} \frac{\phi_1 k_1 - 1}{k_1} N, \\
	\hat f_{xy} - \frac{N}{k_2} &\le \phi_2 \hat f_x - \frac{k_2 \phi_2 + k_1}{k_2 k_1} N, \\	
	\hat f_{xy} - \frac{N}{k_2} &\le \phi_2 \hat f_x - \frac{\phi_2}{k_1}N - \frac{N}{k_2}, \\	
	\hat f_{xy} &\le \phi_2 (\hat f_x - \frac{N}{k_1}).
	\end{aligned}
\end{equation}

Taking into account that the estimated frequency is less than the selection criterion, the corresponding tuple will not be reported, proving the theorem. Moreover, this theorem also implies that for all of the correlated heavy hitter candidates it holds that:

\begin{equation}
f_{xy} > (\phi_2 - \frac{k_2 \phi_2 + k_1}{k_2(k_1 \phi_1 -1)})f_x.
\end{equation}
 
\end{proof} \qed

Theorems~\ref{thm_primfreq} and \ref{thm_corrfreq} can be used for tuning the stream summary sizes $k_1$ and $k_2$. The ACHH problem poses a constraint about the tolerance $\epsilon_1$ on the number of primary frequent false positives and a corresponding constraint about the tolerance $\epsilon_2$ on the number of correlated heavy hitter false positives. Therefore, $k_1$ and $k_2$ are also subject to the following constraints:

\begin {enumerate}[C1]
\item The ACHH problem does not admit false negatives, hence all of the real primary frequent items must be reported. The maximum number of primary frequent items is $\frac{1}{\phi_1}$ hence
\begin{equation}
	k_1 \geq \frac{1}{\phi_1}.
\end{equation}

\item The ACHH problem allows primary false positives only with a tolerance given by $\epsilon_1N$, hence for all of the primary frequent candidates it must be $f_x \geq (\phi_1 -\epsilon_1)N$. Using Theorem~\ref{thm_primfreq} we need to impose:

\begin{equation}
	\frac{1}{k_1} \leq \epsilon_1 \Rightarrow k_1 \geq \frac{1}{\epsilon_1}.
\end{equation}

\item The ACHH problem does not admit false negatives, hence all of the real correlated heavy hitters must be reported. The maximum number of correlated heavy hitters is $\frac{1}{\phi_1\phi_2}$,  hence
\begin{equation}
	k_2 \geq \frac{1}{\phi_1\phi_2}.
\end{equation}

\item The ACHH problem allows correlated false positives only with a tolerance given by $\epsilon_2 f_x$, hence for all of the correlated heavy hitter candidates it must be $f_{xy} \geq (\phi_2 -\epsilon_2)f_x$. Using Theorem~\ref{thm_corrfreq} we need to impose:

\begin{equation}
\label{k2_constraint}
	\frac{k_2 \phi_2 + k_1}{k_2(k_1 \phi_1 -1)} \leq \epsilon_2.
\end{equation}
 
Solving Eq.~(\ref{k2_constraint}) w.r.t. $k_2$, we have

\begin{equation}
\label{k2_constraint_b}
	k_2(\epsilon_2 k_1 \phi_1 - \epsilon_2 -\phi_2) \geq k_1
\end{equation}

Since $k_2$ and $k_1$ represent the sizes of the stream summaries, both are positives integers. Therefore, a solution to Eq.~(\ref{k2_constraint_b}) is feasible only when the left hand side term is positive. To summarize, the current constraint is expressed by the following equations:

\begin{equation}
	\begin {aligned}
	\epsilon_2 k_1 \phi_1 - \epsilon_2 -\phi_2 > 0 \Rightarrow &k_1 > \frac{\epsilon_2 + \phi_2}{\epsilon_2 \phi_1},\\
	&k_2 \geq \frac{k_1}{\epsilon_2 k_1 \phi_1 - \epsilon_2 -\phi_2}.
	\end {aligned}
\end{equation}

\end {enumerate}

Constraint C1 can be ignored since it is already embodied by constraint C2; indeed, the problem requires that $\epsilon_1 < \phi_1$. Constraint C3 can be ignored as well since it can be easily proved that $\frac{k_1}{\epsilon_2 k_1 \phi_1 - \epsilon_2 -\phi_2} > \frac{1}{\phi_1\phi_2}$ for any value of the input parameters.

Introducing the terms $\beta = \frac{1}{\epsilon_2 \phi_1}$ and $\gamma = \frac{\epsilon_2 + \phi_2}{\epsilon_2 \phi_1}$, we have to determine $k_1$ and $k_2$ such that their sum is minimized subject to the following constraints:

\begin{equation}
	\begin {aligned}
	&k_1 \geq \frac{1}{\epsilon_1}, \\
	&k_1 > \gamma, \\
	&k_2 \geq \beta\frac{k_1}{k_1 - \gamma}.
	\end {aligned}
\end{equation}

The optimal values for $k_1$ and $k_2$ are obtained by solving a constrained minimization problem. Introducing a variable substitution $r = k_1+k_2$ and $s=k_1$ the constrained minimization problem is formulated as follows:

\begin{equation}
\begin{aligned}
& {\text{minimize}}
&&  r \\
& \text{subject to}
&& s \geq \frac{1}{\epsilon_1}, \\
&&& s > \gamma, \\
&&& r \geq s + \beta\frac{s}{s - \gamma}.
\end{aligned}
\end{equation}

From the last constraint we can deduce that the minimum value of $r$ must belong to the curve $r = s + \beta\frac{s}{s - \gamma}$ which attains its minimum for $s = \gamma + \sqrt{\beta \gamma}$. Taking into account both constraints on $s$, the minimum is reached when

\begin{equation}
	\begin {aligned}
	s &= \max\left\{\frac{1}{\epsilon_1}, \gamma + \sqrt{\beta \gamma}\right\}, \\
	r &= s + \beta\frac{s}{s - \gamma}.
	\end {aligned}
\end{equation}

Therefore, the corresponding $k_1$ and $k_2$ values are
\begin{equation}
	\begin {aligned}
	k_1 &= \max\left\{\frac{1}{\epsilon_1}, \gamma + \sqrt{\beta \gamma}\right\}, \\
	k_2 &= \beta\frac{k_1}{k_1 - \gamma}.
	\end {aligned}
	\label{eq_k1k2}
\end{equation}

These are the values set by Algorithm \ref{CSSCHH Init} in order to initialize the CSSCHH stream summaries data structure by using the minimum number of counters, and, consequently, of space required to solve the ACHH problem.

\section{Space and Time complexity}
\label{analysis}

In this section, we analyze the worst case time and space complexity of our algorithm. Regarding the initialization phase (Algorithm \ref{CSSCHH Init}), the worst case complexity is  clearly $O(1)$ since initialization consists of just a few assignments, each one requiring at most $O(1)$ time. 

The update procedure (Algorithm \ref{CSSCHH Update}) requires constant time as well. Indeed, each one of the two calls to {\sf SPACESAVINGUPDATE} requires at most $O(1)$.

Finally, a query (Algorithm \ref{CSSCHH Query}) requires time at most $O(k_1 + k_2)$. Indeed, the first part of the query is just a linear scan of the $k_1$ counters related to the primary stream summary $\mathcal{S}^p$, in which we check, for each counter, if the corresponding monitored item's frequency exceed the selection criterion. When the check succeeds, the item and its estimated frequency are added to the hash table $F$. Since checking the condition can be done in constant time, this part of the query requires time at most $O(k_1)$. 

Next, we inspect the correlated stream summary $\mathcal{S}^t$. Again, this is just a linear scan. For each counter, we retrieve the stored tuple $(r, s)$, and its estimated frequency. Then, we search for the primary item $r$ in $F$; if the item belongs to $F$ and if the condition required for the tuple $(r, s)$ to be considered a CHH is verified, then we update the list $C$ holding the CHHs that shall be returned to the user. Searching in $F$ requires constant time (since $F$ is implemented as an hash table), and verifying the CHH condition requires constant time as well, the second part of the query requires time at most $O(k_2)$. It follows that, overall, the query requires in the worst case time at most $O(k_1 + k_2)$.

Regarding the space complexity, it is clear that the total space required is at most $O(k_1 + k_2)$, since we use $k_1$ counters for $\mathcal{S}^p$ and $k_2$ counters for $\mathcal{S}^t$. More specifically, in order to express the space complexity with regard to the input parameters, we distinguish two cases as in Eq.~(\ref{eq_k1k2}). When $1/\epsilon_1 \leq \gamma + \sqrt{\beta \gamma}$, we have

\begin{equation}
	k_1 + k_2 = \frac{2 \sqrt{\epsilon_2+ \phi_2}+ \epsilon_2+\phi_2+1}{\epsilon_2 \phi_1} = O\left(\frac{1}{\epsilon_2 \phi_1}\right);
\end{equation}

\noindent otherwise, for $1/\epsilon_1 > \gamma + \sqrt{\beta \gamma}$, it holds that

\begin{equation}
	k_1 + k_2 = \frac{1}{\epsilon_1}+\frac{1}{\epsilon_1 \epsilon_2 \phi_1 \left(\frac{1}{\epsilon_1} - \frac{\epsilon_2+\phi_2}{\epsilon_2 \phi_1}\right)} < \frac{1}{\epsilon_1}+\frac{1}{\epsilon_1 \sqrt{\epsilon_2+\phi_2}} = O\left(\frac{1}{\epsilon_1 \sqrt{\epsilon_2}}\right).
\end{equation}

\section{Theoretical Comparison}
\label{theory}

In this section we compare MGCHH and our algorithm CSSCHH from a theoretical perspective, before presenting the results of the experiments that we have carried out. We begin by comparing the space complexity and how many counters are required by both algorithms to guarantee their error bounds.

Let $\alpha = \frac{1+\phi_2}{\phi_1-\epsilon_1}$, then, if $\epsilon_1 \geq \frac{\epsilon_2}{2 \alpha}$, MGCHH requires $s_1 = \frac{2 \alpha}{\epsilon_2}$ counters in order to keep track of the primary frequent items, and $s_2 = \frac{2}{\epsilon_2}$ counters to track correlated frequent items; otherwise (if $\epsilon_1 < \frac{\epsilon_2}{2 \alpha}$), $s_1 = \frac{1}{\epsilon_1}$ and $s_2 = \frac{1}{\epsilon_2 - \alpha \epsilon_1}$. 

In the former case the space complexity of MGCHH is $O(\frac{1}{(\phi_1 - \epsilon_1) \epsilon_2^2})$, and in the latter case its space complexity is $O(\frac{1}{\epsilon_1 \epsilon_2})$ \cite{Lahiri2016}.

MGCHH requires a total of $s_1 + (s_1 s_2)$ counters, since the algorithm consists of a nested application of the Frequent algorithm: besides the $s_1$ counters for primary frequent items, there are $s_1 s_2$ counters for correlated items. Indeed, for each primary counter there is an entire summary consisting of $s_2$ counters. 

Our CSSCHH algorithm requires $k_1 = \max\left\{\frac{1}{\epsilon_1}, \gamma + \sqrt{\beta \gamma}\right\}$ counters for the primary frequent items and $k_2 = \beta\frac{k_1}{k_1 - \gamma}$ for the correlated frequent items, where $\beta = \frac{1}{\epsilon_2 \phi_1}$ and $\gamma = \frac{\epsilon_2 + \phi_2}{\epsilon_2 \phi_1}$. 

As shown in the previous section, the space complexity of our algorithm is $ O\left(\frac{1}{\epsilon_2 \phi_1}\right)$ when $1/\epsilon_1 \leq \gamma + \sqrt{\beta \gamma}$, and it is $O\left(\frac{1}{\epsilon_1 \sqrt{\epsilon_2}}\right)$ when $1/\epsilon_1 > \gamma + \sqrt{\beta \gamma}$. It is immediate verifying that our algorithm requires asymptotically less space than MGCHH. From a practical perspective, it's worth noting here that MGCHH is also subject to the constraint $\epsilon_1 < \phi_1 / 2$, whilst our algorithm is not. 

Figure \ref{counters} depicts the number of counters required by MGCHH and CSSCHH (using a logarithmic scale on the $z$ axis) fixing $\phi_1 = 0.01$, $\phi_2 = 0.01$ and letting $\epsilon_1$ varying up to $\phi_1 / 2$,  $\epsilon_2$ varying up to $\phi_2$. 

\begin{figure}[hbt]
\centering
  \includegraphics[scale=0.6]{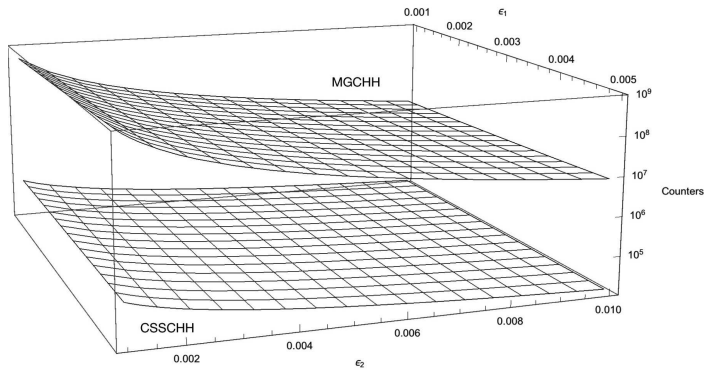} 
  \caption{Counters required by MGCHH and CSSCHH}
  \label{counters}
\end{figure}

As shown by the two surfaces, CSSCHH requires several orders of magnitude less counters than MGCHH. Moreover, we expect MGCHH to be slower with regard to CSSCHH. Indeed, every update step, in which the incoming primary stream item is not monitored and all of the $s_1$ counters are full, requires not only decrementing all of the $s_1$ counters (as in the Frequent algorithm), but also, for each one of them, MGCHH must randomly select a correlated item and decrease its frequency as well.  Now, we consider and discuss the accuracy of MGCHH. Being based on the Frequent algorithm, we know that, overall, its accuracy shall be lower than the accuracy provided by CSSCHH which is, instead, based on the Space Saving algorithm. Indeed, it is well known that Space Saving is  more accurate than Frequent \cite{Cormode,Manerikar}. In the next section we shall experimentally see how much faster and accurate CSSCHH is with regard to MGCHH.

\section{Experimental results}
\label{results}

In order to compare and evaluate our CSSCHH algorithm against MGCHH we have implemented them in C++. The source code has been compiled using the clang c++ compiler v8.0 on Mac OS X v10.12 with the following flags: -Os -std=c++14. We recall here that, on Mac OS X, the optimization flag -Os provides better optimization than the -O3 flag and is the standard for building the release build of an application. The tests have been carried out on a machine equipped wth 16 GB of RAM and a 3.2 GHz quad-core Intel Core i5 processor with 6 MB of cache level 3.

The items in the synthetic datasets used in our experiments are distributed according to the Zipf distribution. In each one of the experiments, the execution have been repeated 10 times using a different seed for the pseudo-random number generator used for creating the input data stream (using the same seeds in the corresponding executions of different algorithms). For each input distribution generated, the results have been averaged over all of the runs. The input items are 32 bits unsigned integers. Table \ref{experiments} reports all of the experiments on synthetic datasets that have been carried out.

We have also experimented using a real dataset, namely Worldcup'98. This dataset is publicly available\footnote{http://ita.ee.lbl.gov/html/contrib/WorldCup.html} and stores information related to the requests made to the World Cup web site during the 1998 tournament. For each request, the dataset includes a ClientID (which is a unique integer identifier for the client that issued the request) and an ObjectID (again, a unique integer identifier for the requested URL). In this experiment we determine correlated heavy hitters between ClientID and ObjectID pairs, treating ClientID as the primary items, and ObjectID as the secondary item. Owing to the huge size of the full dataset, we used a subset of the available data, i.e., the data from day 41 to day 46 of the competition. Table \ref{real-dataset} reports the statistical characteristics of the real dataset.

Regarding synthetic datasets, we vary the input data stream size ($n$, in millions), the skew of the zipfian distribution ($\rho$), the total space used (measured in MegaBytes) and the $\phi_1$ and $\phi_2$ support thresholds. The value of the remaining parameters are fixed and reported in each individual plot. Regarding the total space used, it is worth noting here that, for MGCHH, a counter (related to either a primary or a correlated item) requires 4 bytes to store the monitored item (an unsigned int) and 8 bytes to store its estimated frequency (a long int), for a total of 12 bytes. On the other hand, for CSSCHH a counter related to primary items requires 12 bytes as well, but a counter monitoring a tuple $(x, y)$ requires instead 4 bytes for $x$, 4 bytes for $y$ and 8 bytes to store the estimated frequency, for a total of 16 bytes. Therefore, the total space used by MGCHH is $12 (s_1 + s_1 s_2)$ bytes, whilst the total space used by CSSCHH is $12 k_1 + 16 k_2$ bytes. The test related to the space used is carried out by assigning to both algorithm exactly the same space (measured in MegaBytes) in order to fairly compare both algorithms and to understand how the algorithms behave when performing under exactly the same conditions (with regard to the space used). Therefore, the allocated space determines the number of counters to be used correspondingly by the algorithms. In all of the other tests, we preserved this property choosing $s_1$, $s_2$, $k_1$ and $k_2$ such that $12k_1 + 16 k_2 =  12 (s_1 + s_1 s_2)$ and $k_1=s_1$. Table \ref{counters-space} reports the counters corresponding to the space used in the experiments.  

We begin our analysis discussing the results for synthetic datasets. The \emph{recall} is the total number of true frequent items reported over the number of frequent items given by an exact algorithm. Therefore, an algorithm is correct iff its \emph{recall} is equal to one. Since the algorithms under test are based respectively on Frequent (MGCHH) and on Space Saving (CSSCHH), their \emph{recall} is always one if they are allowed to use enough counters. In all of the test we used a number of counters $s_1$ and $k_1$ greater than $\frac{1}{\phi_1}$ for the primary items and a number of counters for correlated items greater that $\frac{1}{\phi_2}$ for $s_2$ and greater than $\frac{1}{\phi_1 \phi_2}$ for $k_2$. This guarantees both algorithms to reach a \emph{recall} equal to one in every case. The plots on the \emph{recall} have not been reported in the paper.

Next, we analyze the accuracy, beginning with the \emph{precision} attained (with regard to the CHHs). Since \emph{precision} is defined as the total number of true frequent items reported over the total number of items reported, this metric quantifies the number of false positives outputted by an algorithm. It follows that, from this point of view, the algorithm's \emph{precision} should ideally be one. As shown in Figure~\ref{precision}, CSSCHH clearly outperforms MGCHH with regard to the \emph{precision}. Indeed, CSSCHH is consistently able to provide one or near one \emph{precision} in all of the tests carried out, whilst MGCHH lags far behind.

Accuracy is also related to the absolute and relative errors on the frequency estimate committed by the algorithms. Denoting with $f$ the exact frequency of a CHH and with $\hat{f}$ the corresponding frequency reported by an algorithm, then the absolute error is, by definition, the difference $\left|  f - \hat{f} \right|$. 

Similarly, the relative error is defined as $\frac{{\left| {f - \hat{f}} \right|}}{f}$ and the average relative error is derived by averaging the relative errors over all of the measured frequencies. Figures~\ref{Abs_error}~and~\ref{Rel_error} depict respectively absolute and relative errors committed by the algorithms (with regard to CHHs). Each single plot reports both the maximum and the mean values attained by both algorithms. Again, it is immediate verifying that CSSCHH is extremely accurate in all of the cases, with absolute and relative errors always equal to zero. On the contrary, MGCHH estimates are clearly affected by significant error.

Finally, we evaluated the algorithm with regard to their speed in processing stream items. Figure \ref{Updates} shows the speed attained, reported as updates per millisecond. Again, CSSCHH outperforms MGCHH, being consistently much faster in all of the cases. In particular, MGCHH is more than three times faster in all of the tests that have been carried out, except the tests in which we vary the skew of the input distribution and the space we allow to be used. Anyway, as shown by the plots, CSSCHH is always faster than MGCHH.

Next, we compare MGCHH and CSSCHH using the real dataset Worldcup'98. Figures \ref{wc-space} and \ref{wc-phi2} depict  the results obtained respectively when varying the space allowed and the $\phi_2$ threshold (fixing the $\phi_1$ threshold). Table \ref{counters-space-worldcup} reports the counters corresponding to the space used. It is worth noting here that, owing to the skewness of the dataset being really low (see Table \ref{real-dataset}), we fixed $\phi_1$ to a low value ($0.001$) and  $\phi_2$ varying from $0.001$ to $0.01$, since otherwise no (or just a few) correlated heavy hitters exist. As shown, both algorithms exhibit the same behaviour already observed on synthetic datasets, with CSSCHH clearly outperforming MGCHH with regard to every metric under consideration.

We conclude by noting that the experimental results fully confirm our theoretical expectations reported in the previous section. CSSCHH is more accurate in terms of both precision, absolute and relative error committed. Our algorithm is also faster than MGCHH. Therefore, CSSCHH is a better alternative to MGCHH for mining correlated heavy hitters.

\begin{table}
\renewcommand{\arraystretch}{1.3}
 \caption{Experiments carried out (synthetic datasets)}
      \label{experiments}
	\centering
    \begin{tabular}{| c | c | c | c |}
    \hline
      Items ($n$, millions) & Skew ($\rho$) & Space used (MBytes) & Threshold ($\phi_1$) \\
      \hline
    \{5, 50, 500, 1000\}  & \{1, 1.4, 1.8, 2.2\} & \{1.009, 3.941, 15.573, 61.908\} & \{0.1, 0.01, 0.001\} \\ \hline
       
        \end{tabular}
    \end{table}

\begin{table}
\renewcommand{\arraystretch}{1.3}
 \caption{Counters corresponding to the space used (synthetic datasets)}
      \label{counters-space}
	\centering
    \begin{tabular}{| c | c | c | c |}
    \hline
      Space (MB) & $k_1$ = $s_1$ & $k_2$ & $s_2$ \\
      \hline
      1.009 & 4,200 & 63,000 & 20 \\ \hline
      3.941 & 8,400 & 252,000 & 40 \\ \hline
      15.573 & 16,800 & 1,008,000 & 80 \\ \hline
      61.908 & 33,600 & 4,032,000 & 160 \\ \hline
       
        \end{tabular}
    \end{table}

\begin{table}
\renewcommand{\arraystretch}{1.3}
 \caption{Statistical characteristics of the real dataset (Worldcup'98)}
      \label{real-dataset}
	\centering
    \begin{tabular}{| c | c | c | c | c | c | c | c | c |}
    
    \hline  & primary & secondary \\ \hline
    Count & 104,271,758 & 104,271,758  \\ \hline
    Distinct items & 539,464 & 21,605 \\ \hline
    Min & 1 & 0 \\ \hline 
    Max & 1,375,004 &  40,317 \\ \hline 
    Mean &  549,924 & 9,774.51 \\ \hline
    Median & 578,870 & 887 \\ \hline
    Std. deviation &  427,859 & 11,094.8 \\ \hline
    Skewness & 0.146792 & 0.449196 \\ \hline
   
    \end{tabular}
    \end{table}

\begin{table}
\renewcommand{\arraystretch}{1.3}
 \caption{Counters corresponding to the space used (Worldcup'98)}
      \label{counters-space-worldcup}
	\centering
    \begin{tabular}{| c | c | c | c |}
    \hline
      Space (MB) & $k_1$ = $s_1$ & $k_2$ & $s_2$ \\
      \hline
      11.558 & 10,000 & 750,000 & 100 \\ \hline
      36.048 & 25,000 & 2,343,750 & 125 \\ \hline
      86.402 & 50,000 & 5,625,000 & 150 \\ \hline
      230.026 & 100,000 & 15,000,000 & 200 \\ \hline
       
        \end{tabular}
    \end{table}

\begin{figure}[]
  \centering
  \begin{tabular}{ccc}
  
  \subfloat[varying $n$]{
           \includegraphics[width=0.3\textwidth]{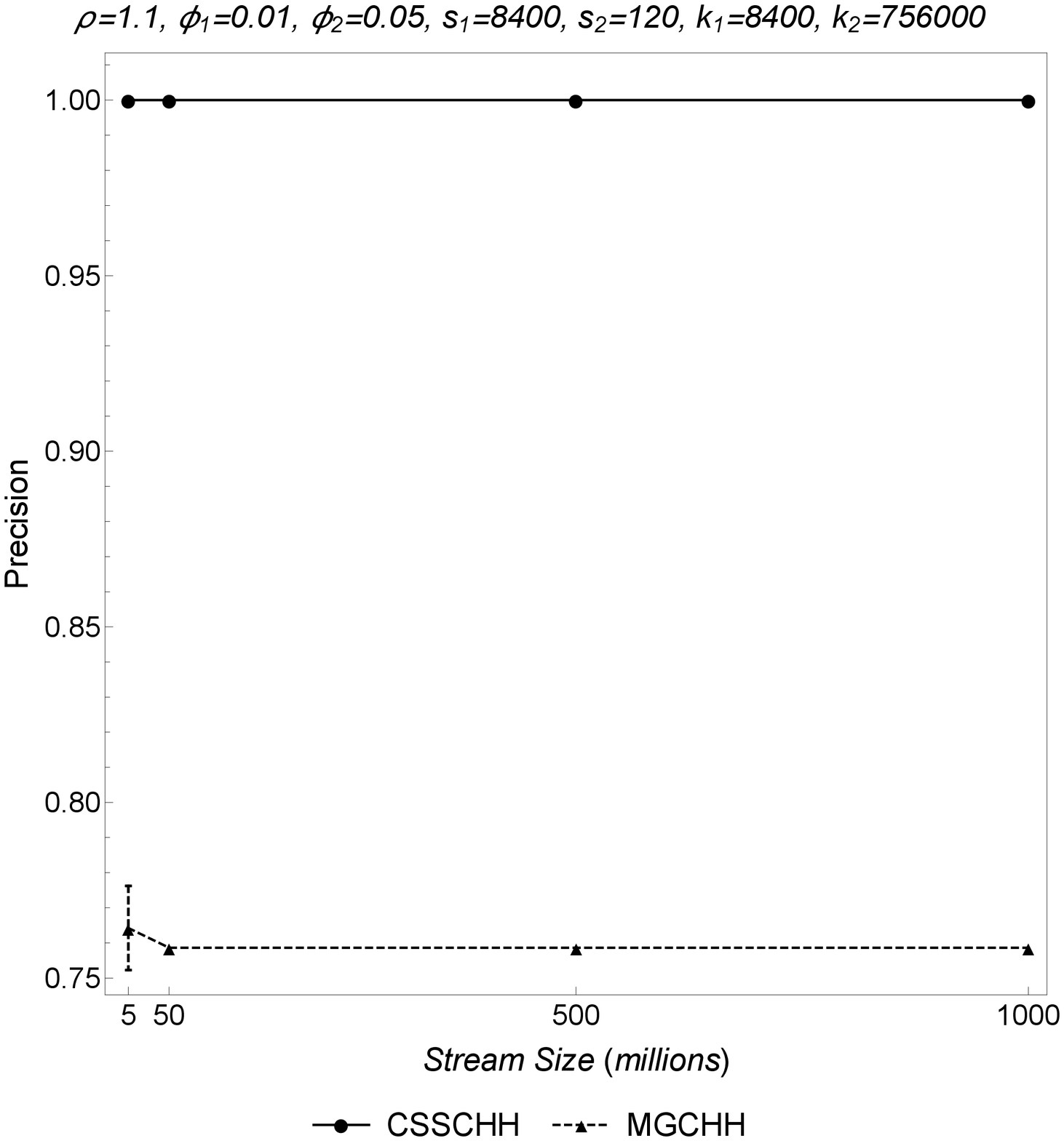}
           \label{ni-prec}
        } &

      \subfloat[varying $\rho$]{
           \includegraphics[width=0.3\textwidth]{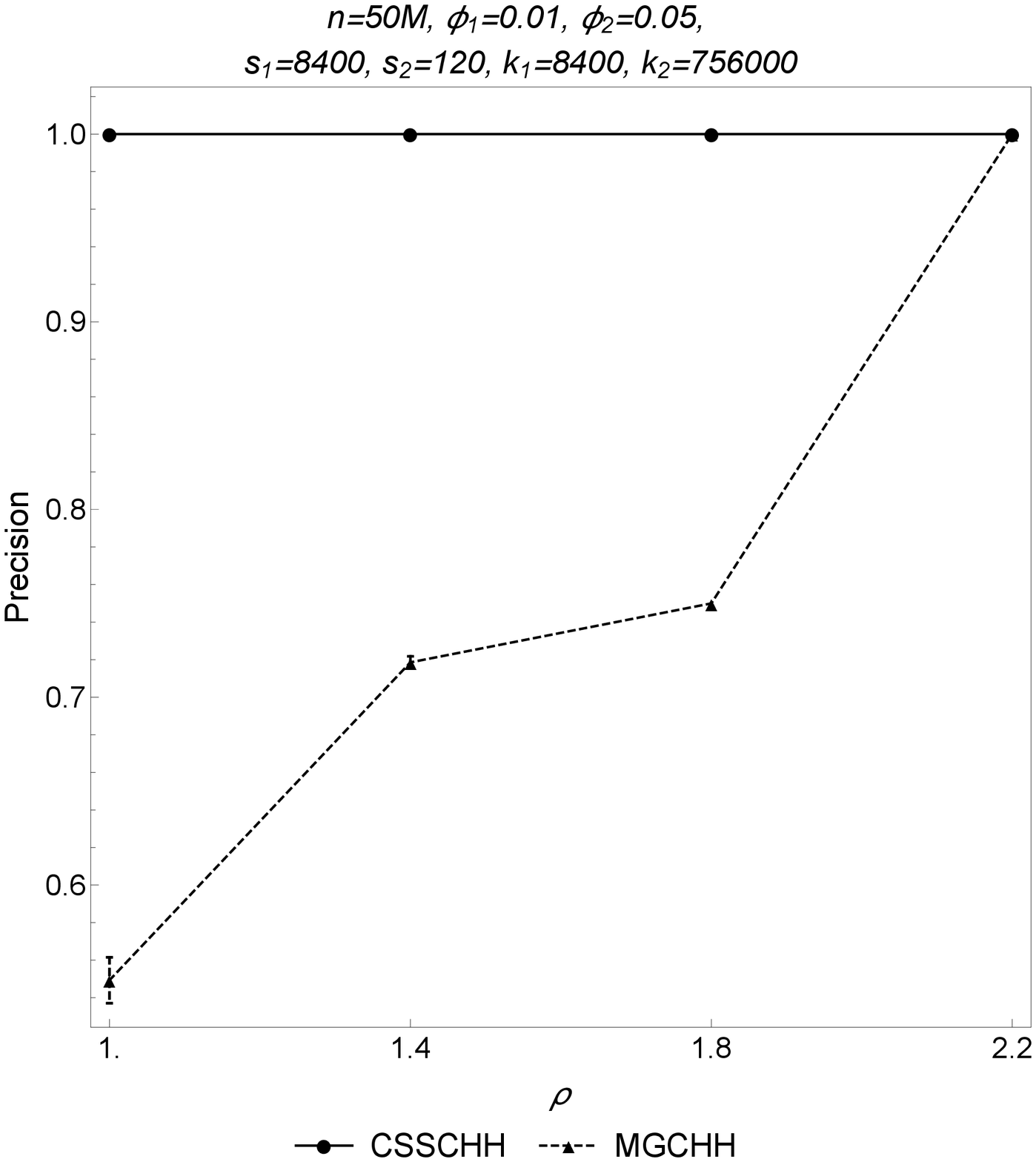}
           \label{sk-prec}
        } & 
        
      \subfloat[varying the space used ]{
           \includegraphics[width=0.3\textwidth]{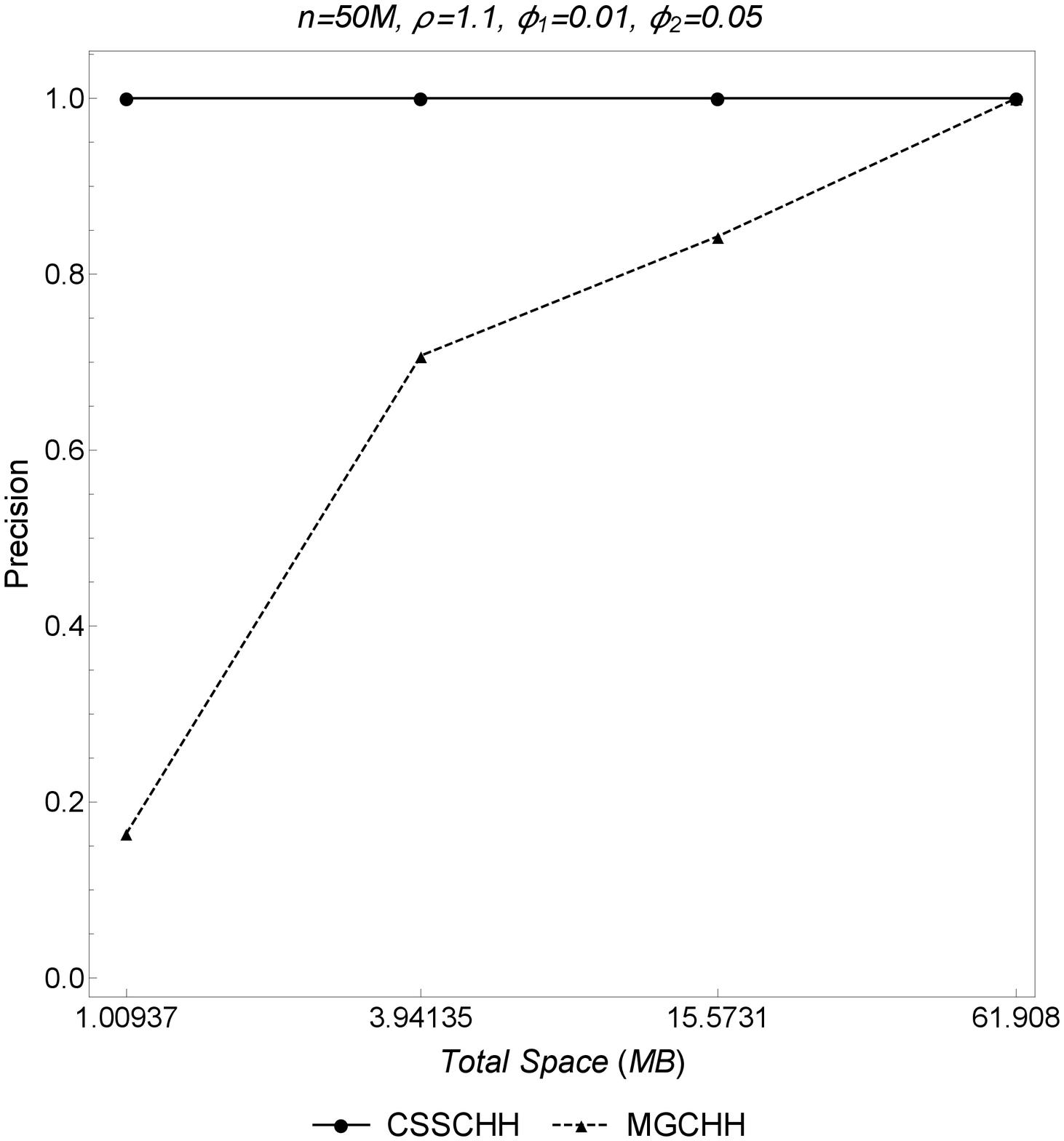}
           \label{size-prec}
        } \\

        \subfloat[$\phi_1 = 0.1$]{
           \includegraphics[width=0.3\textwidth]{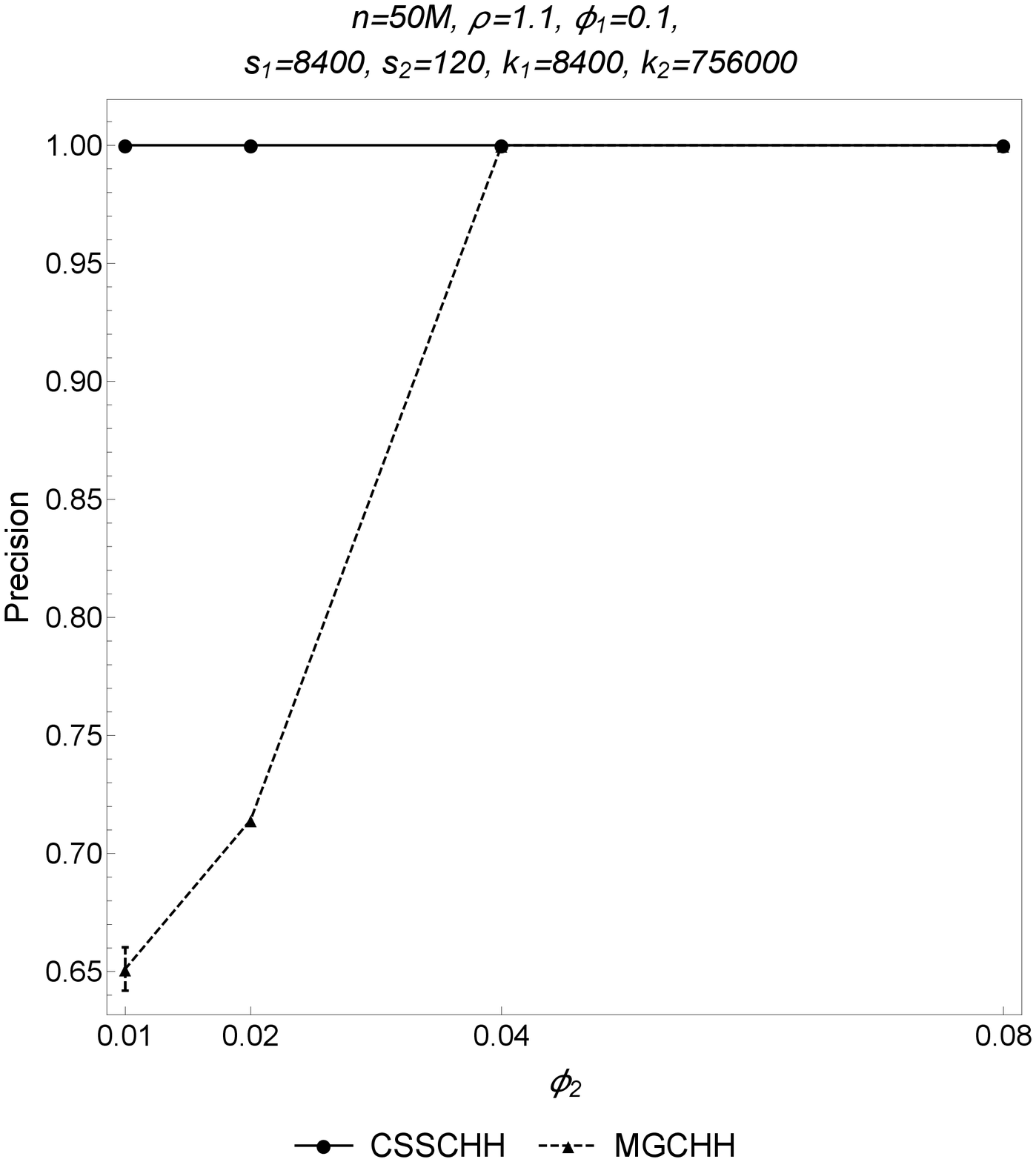}
           \label{phi1-prec1}
        } &

      \subfloat[$\phi_1 = 0.01$]{
           \includegraphics[width=0.3\textwidth]{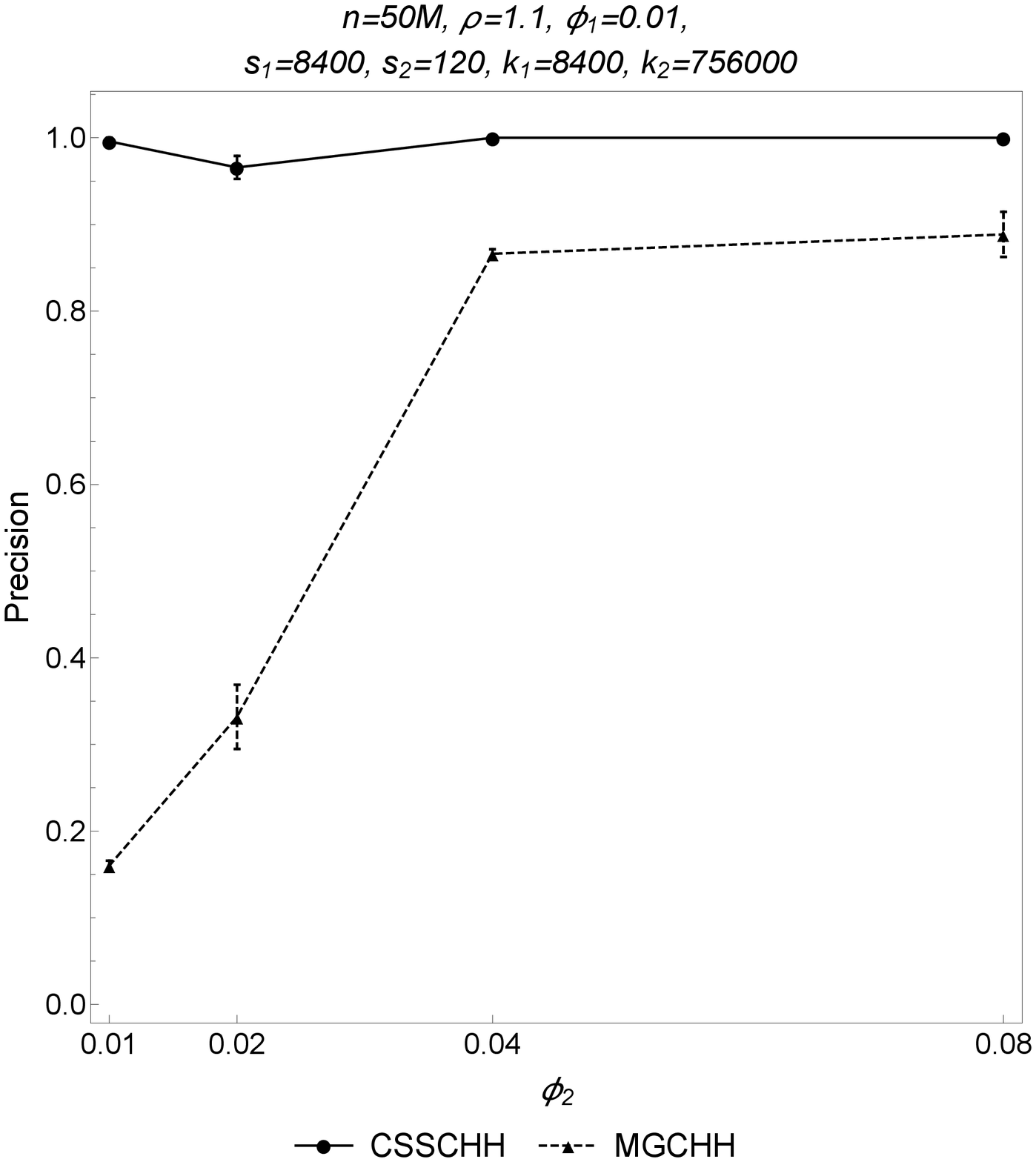}
           \label{phi2-prec2}
        } &
        
      \subfloat[$\phi_1 = 0.001$]{
           \includegraphics[width=0.3\textwidth]{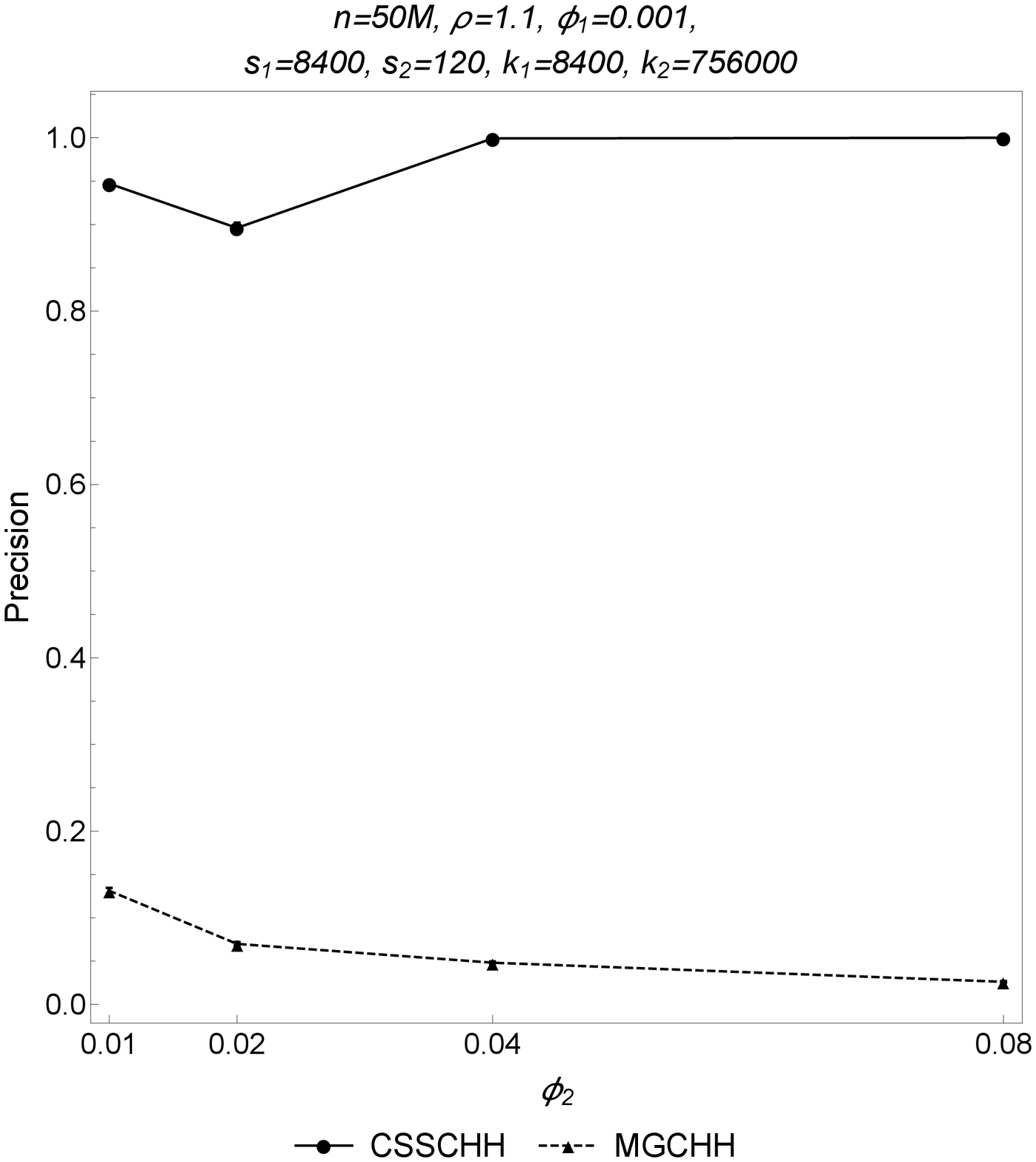}
           \label{phi3-prec3}
        } 
\end{tabular}
 
 \caption{Precision (mean and confidence interval)} 
 \label{precision}
\end{figure}

\begin{figure}[]
  \centering
  \begin{tabular}{ccc}
  
  \subfloat[varying $n$]{
           \includegraphics[width=0.3\textwidth]{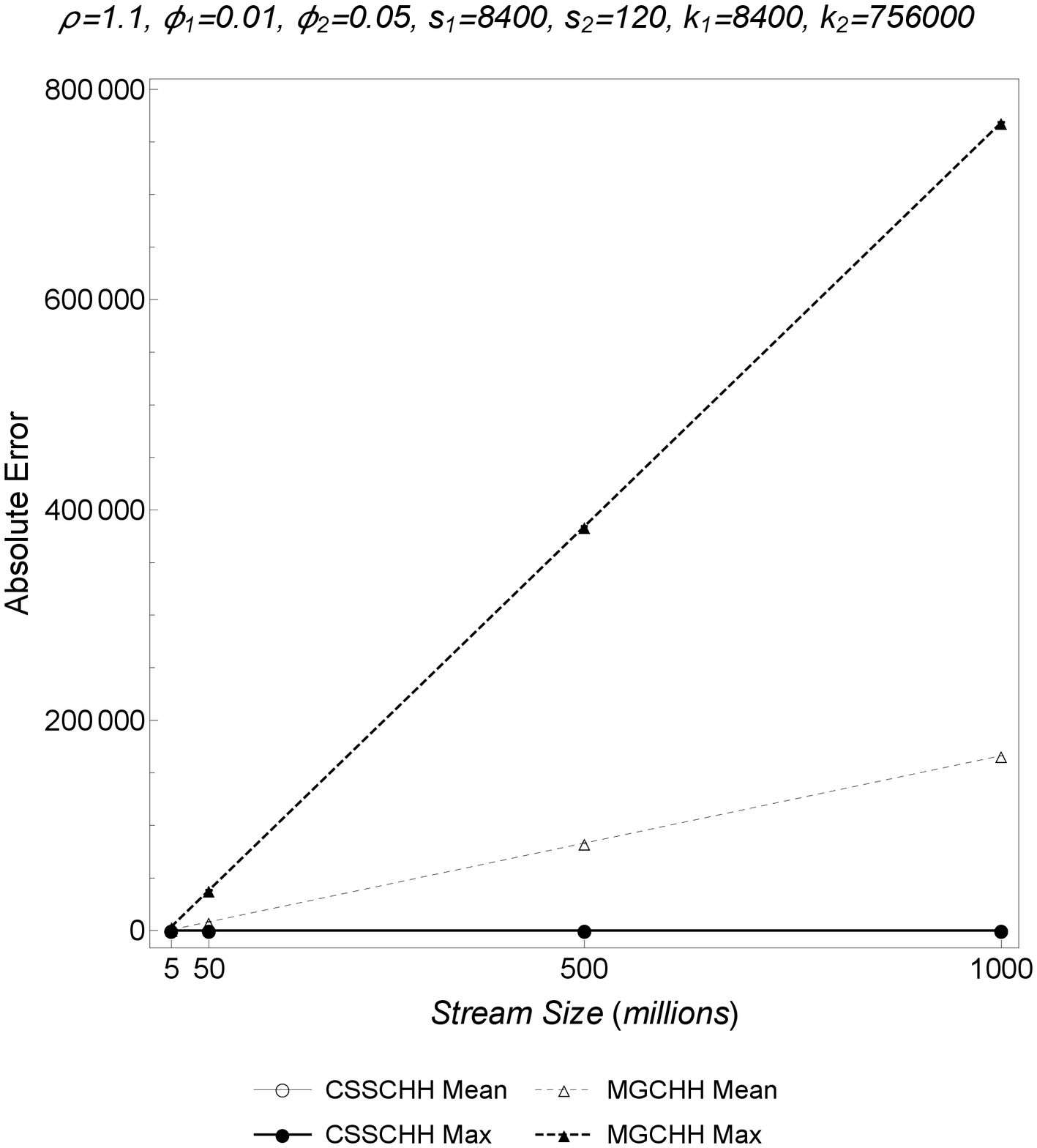}
           \label{ni-abserr}
        } &

      \subfloat[varying $\rho$]{
           \includegraphics[width=0.3\textwidth]{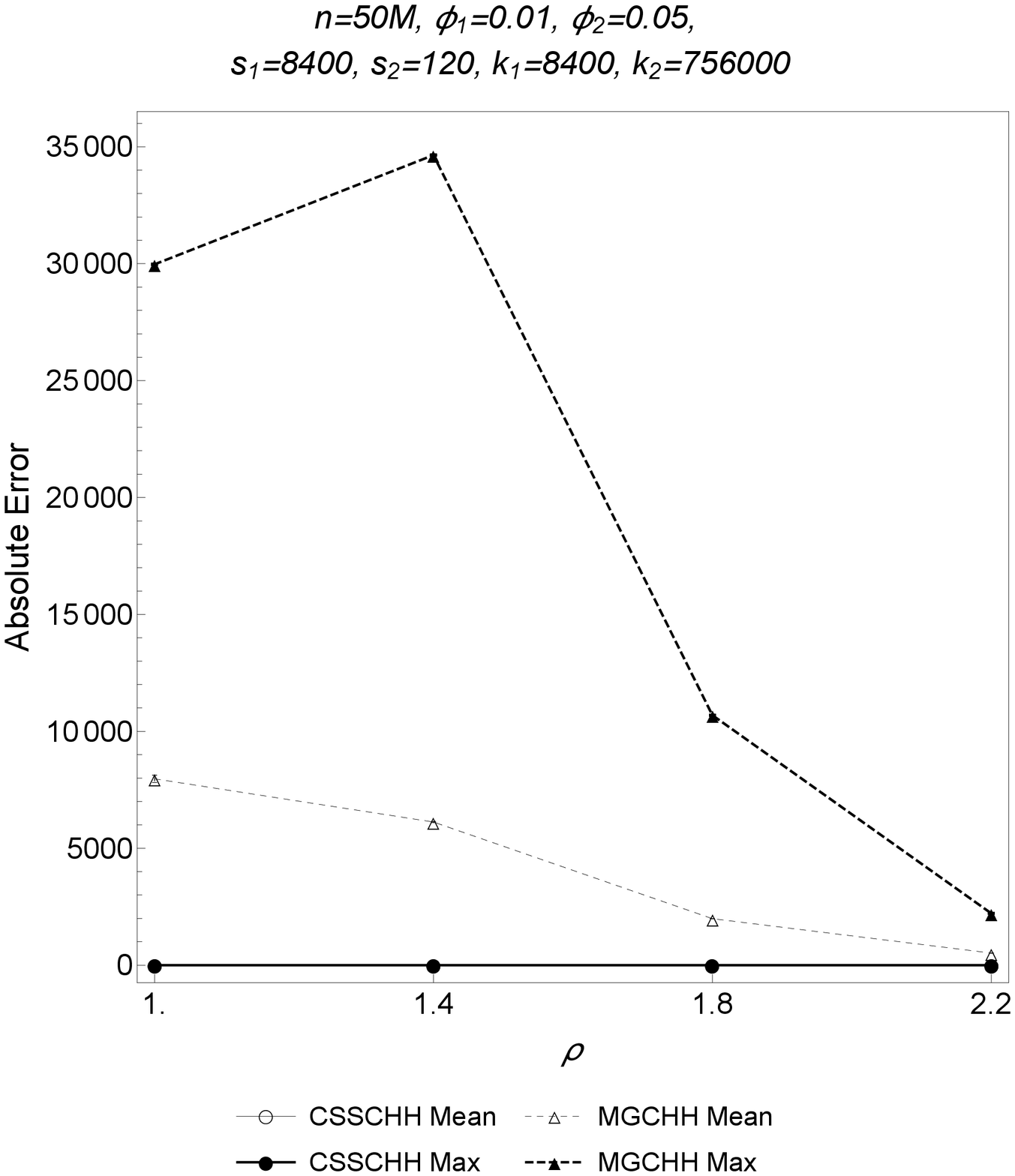}
           \label{sk-abserr}
        } &
        
      \subfloat[varying the space used ]{
           \includegraphics[width=0.3\textwidth]{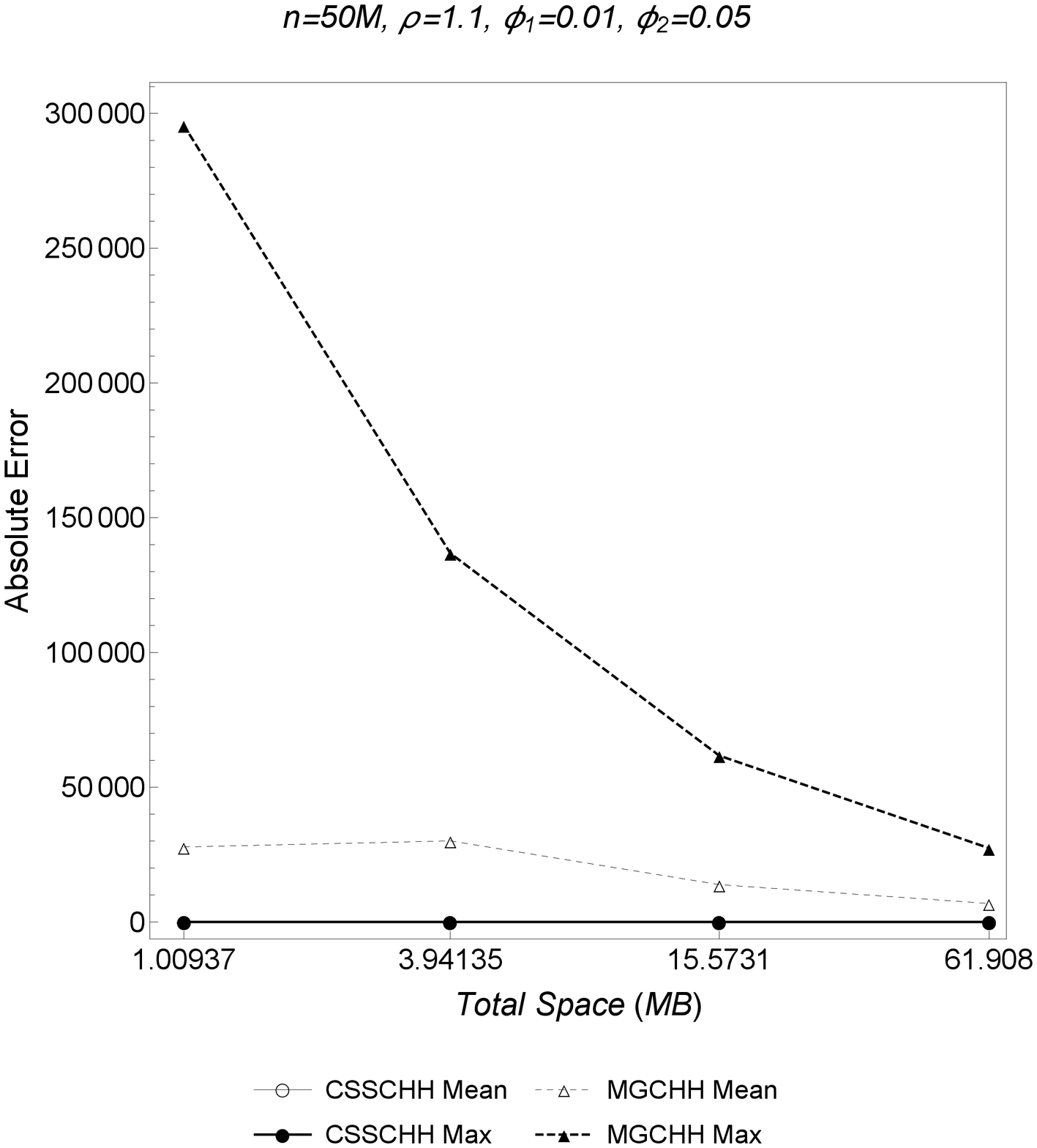}
           \label{size-abserr}
        } \\

        \subfloat[$\phi_1 = 0.1$]{
           \includegraphics[width=0.3\textwidth]{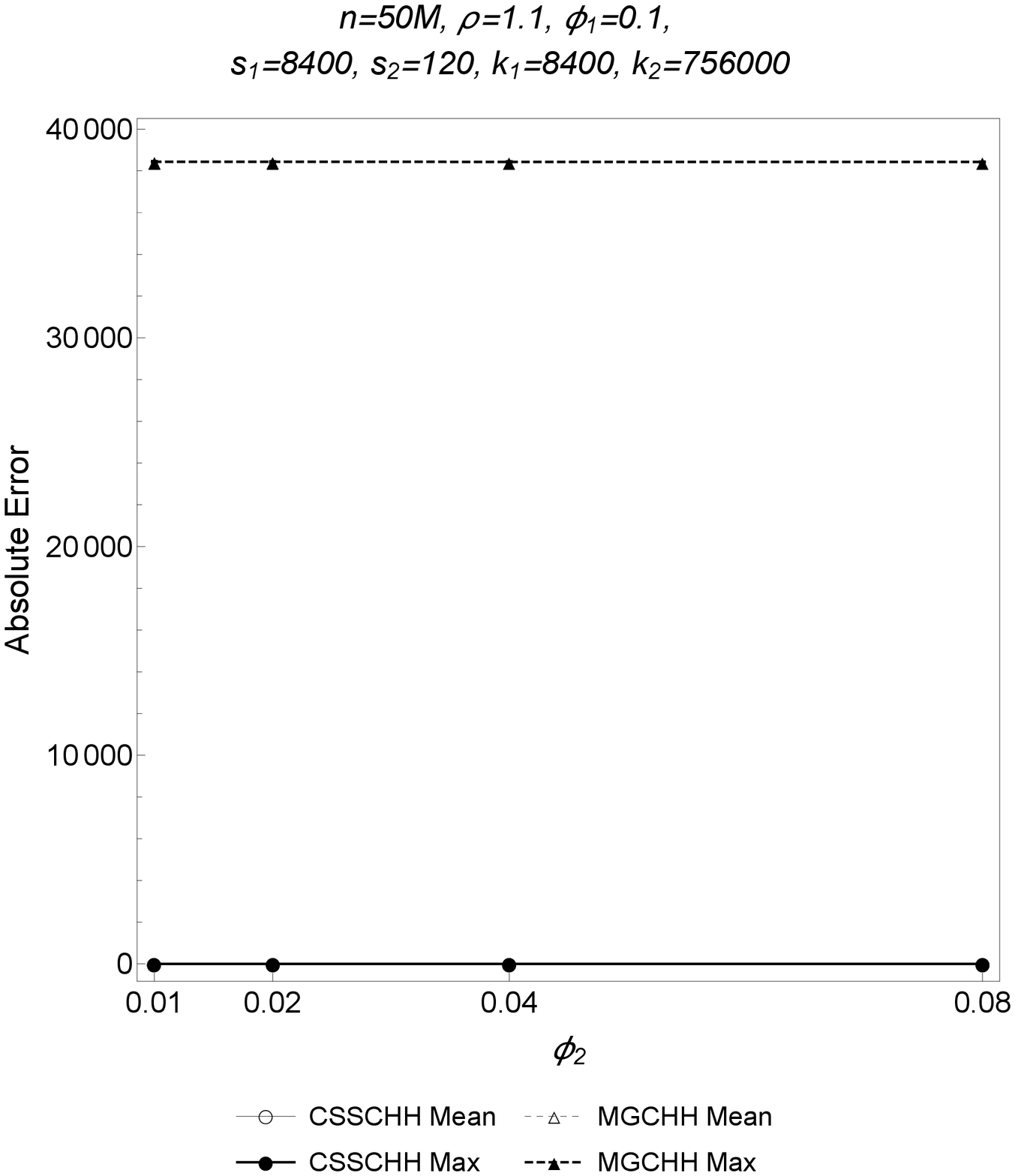}
           \label{phi1-abserr}
        }  &

      \subfloat[$\phi_1 = 0.01$]{
           \includegraphics[width=0.3\textwidth]{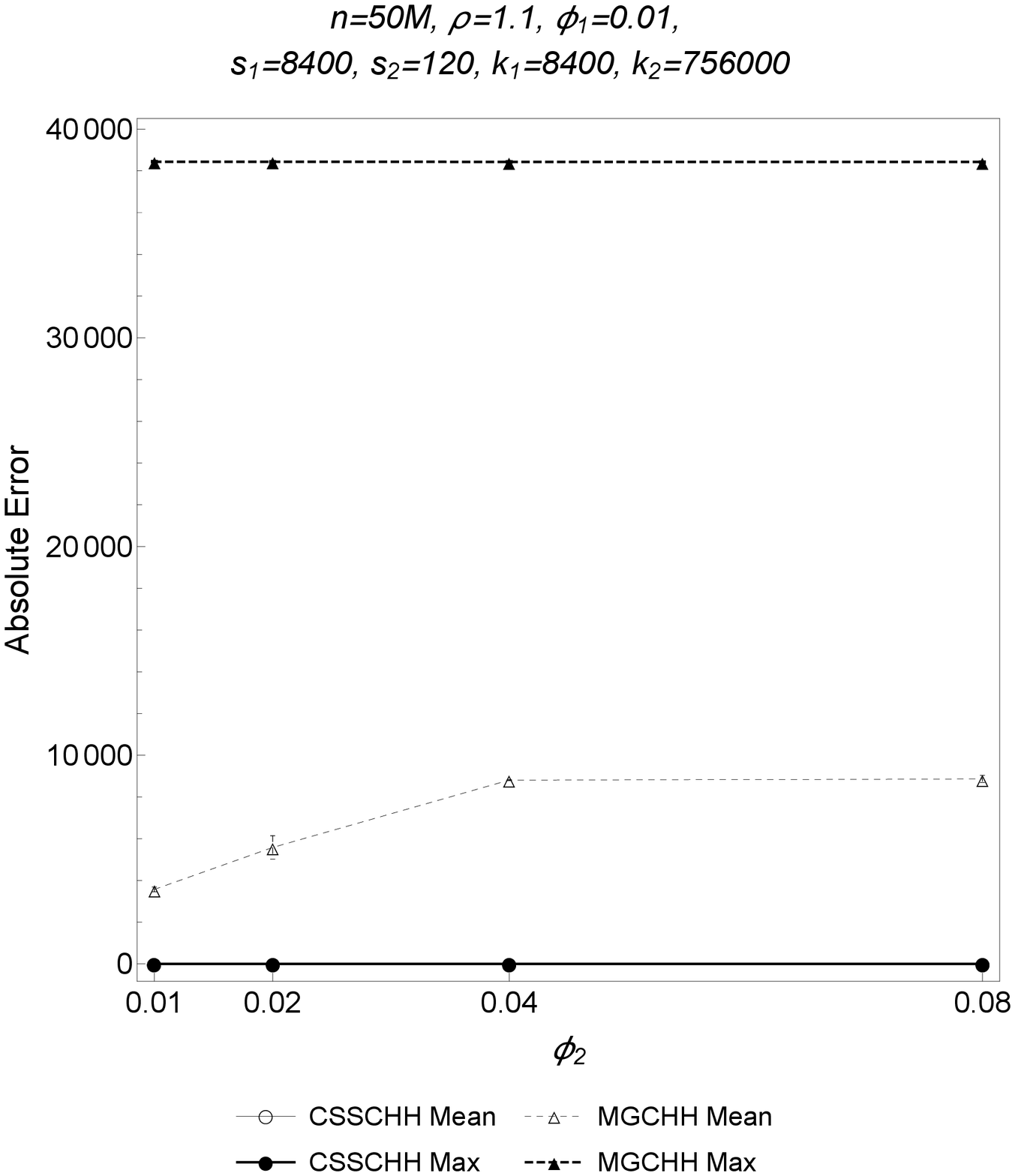}
           \label{phi2-abserr}
        } &
        
      \subfloat[$\phi_1 = 0.001$]{
           \includegraphics[width=0.3\textwidth]{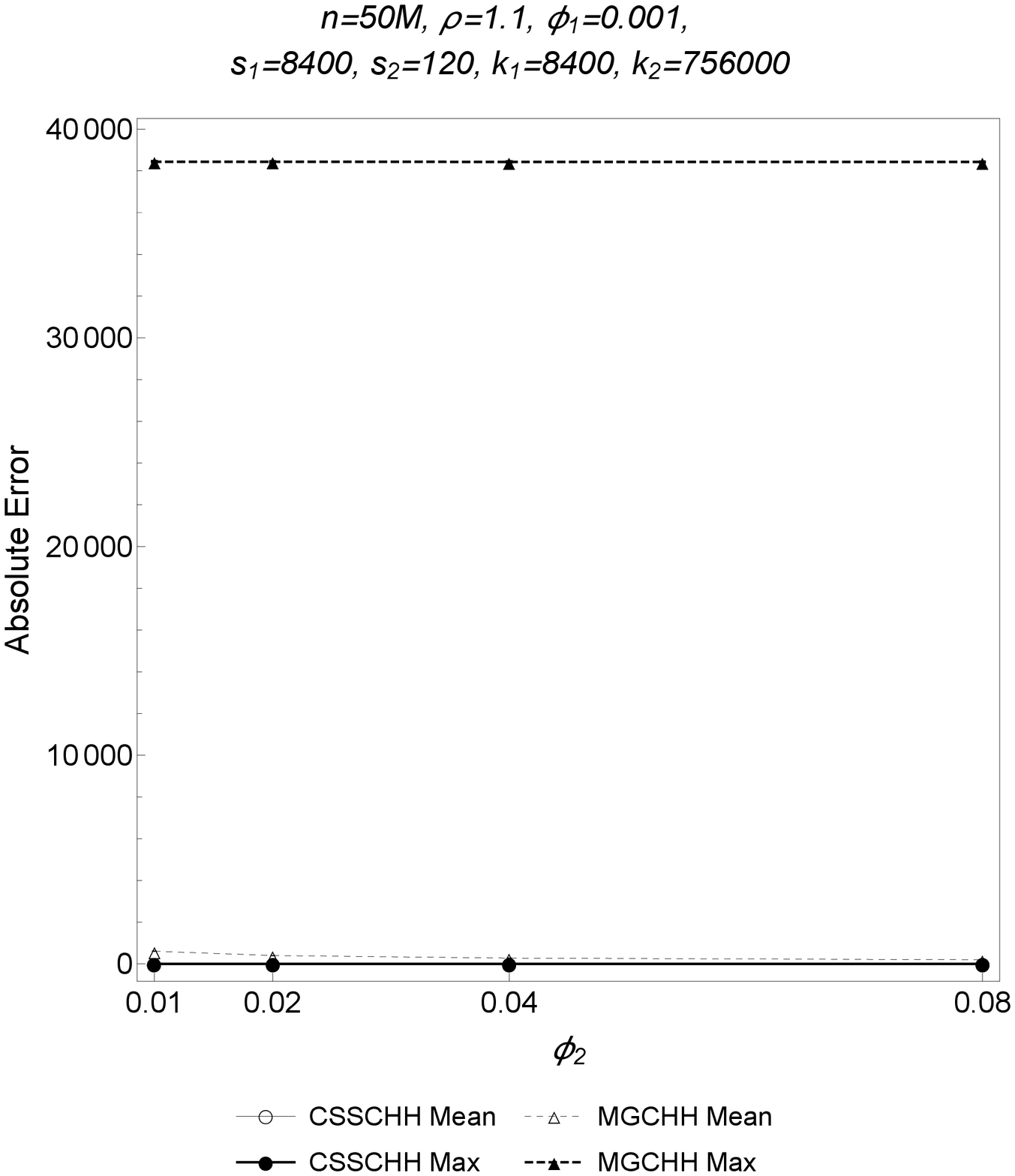}
           \label{phi3-abserr}
        } 
\end{tabular}
 
 \caption{Absolute Error (mean and confidence interval)} 
 \label{Abs_error}
\end{figure}

\begin{figure}[]
  \centering
  \begin{tabular}{ccc}
  
  \subfloat[varying $n$]{
           \includegraphics[width=0.3\textwidth]{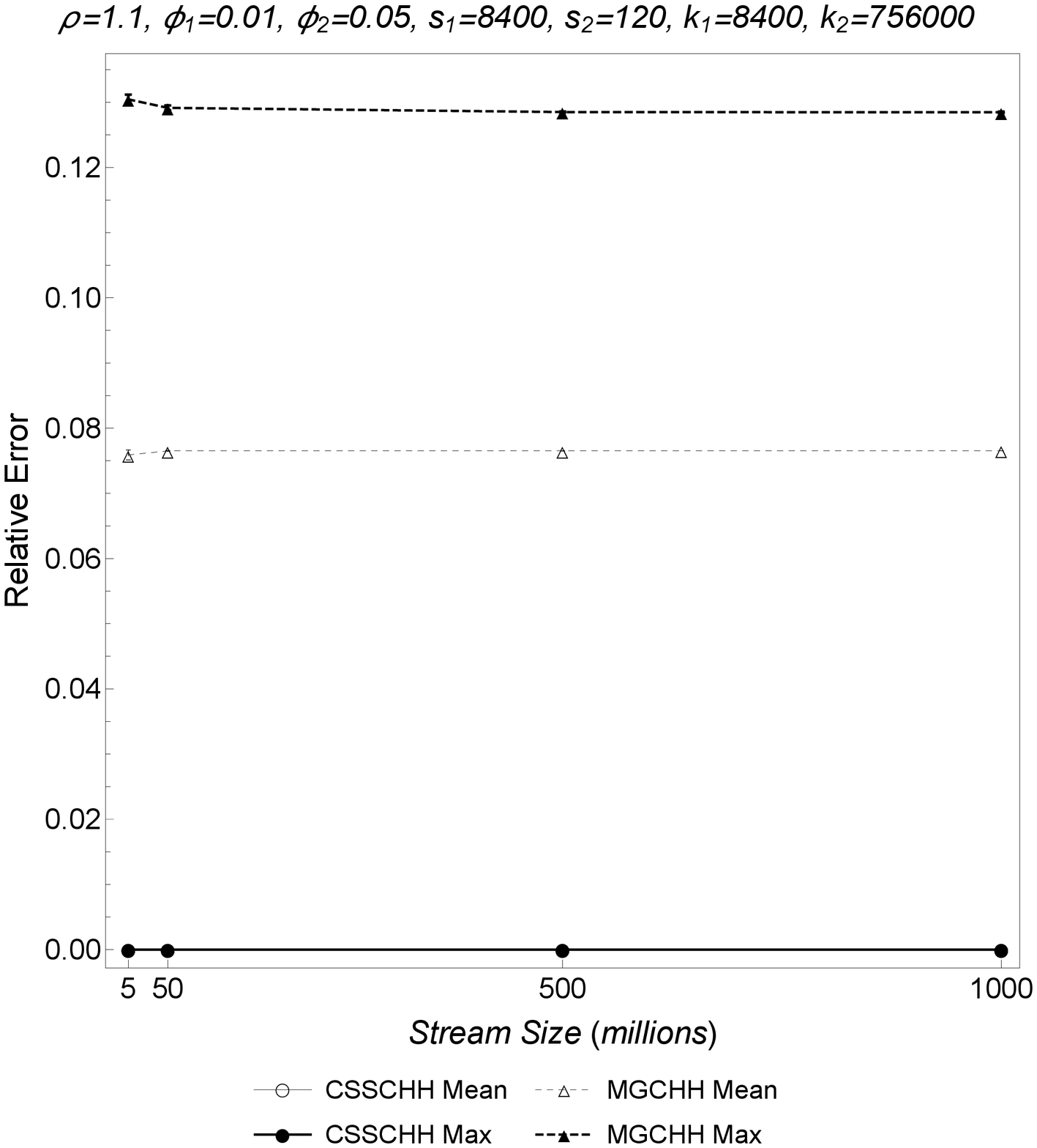}
           \label{ni-relerr}
        } &

      \subfloat[varying $\rho$]{
           \includegraphics[width=0.3\textwidth]{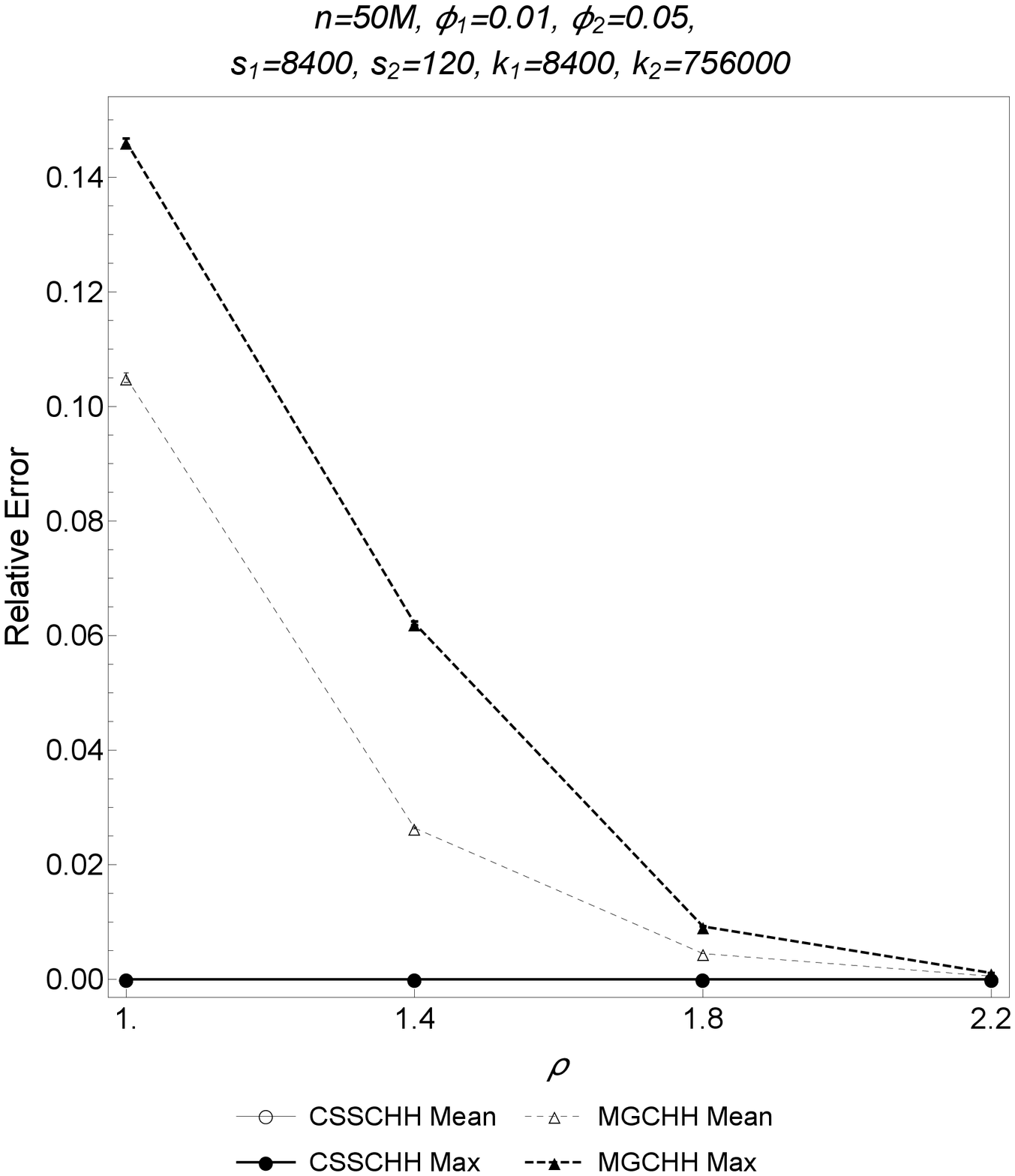}
           \label{sk-relerr}
        } &
        
      \subfloat[varying the space used ]{
           \includegraphics[width=0.3\textwidth]{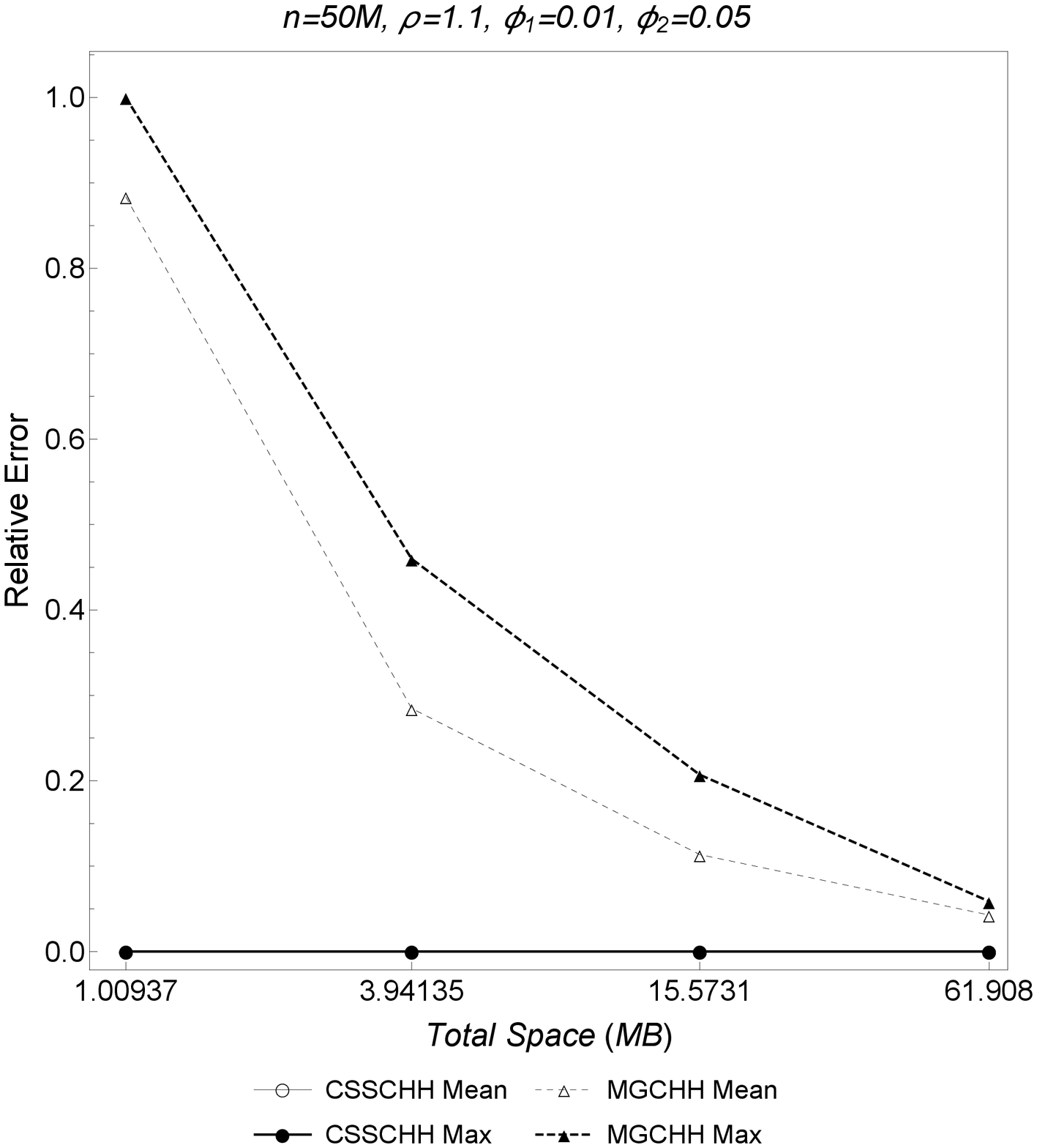}
           \label{size-relerr}
        } \\

        \subfloat[$\phi_1 = 0.1$]{
           \includegraphics[width=0.3\textwidth]{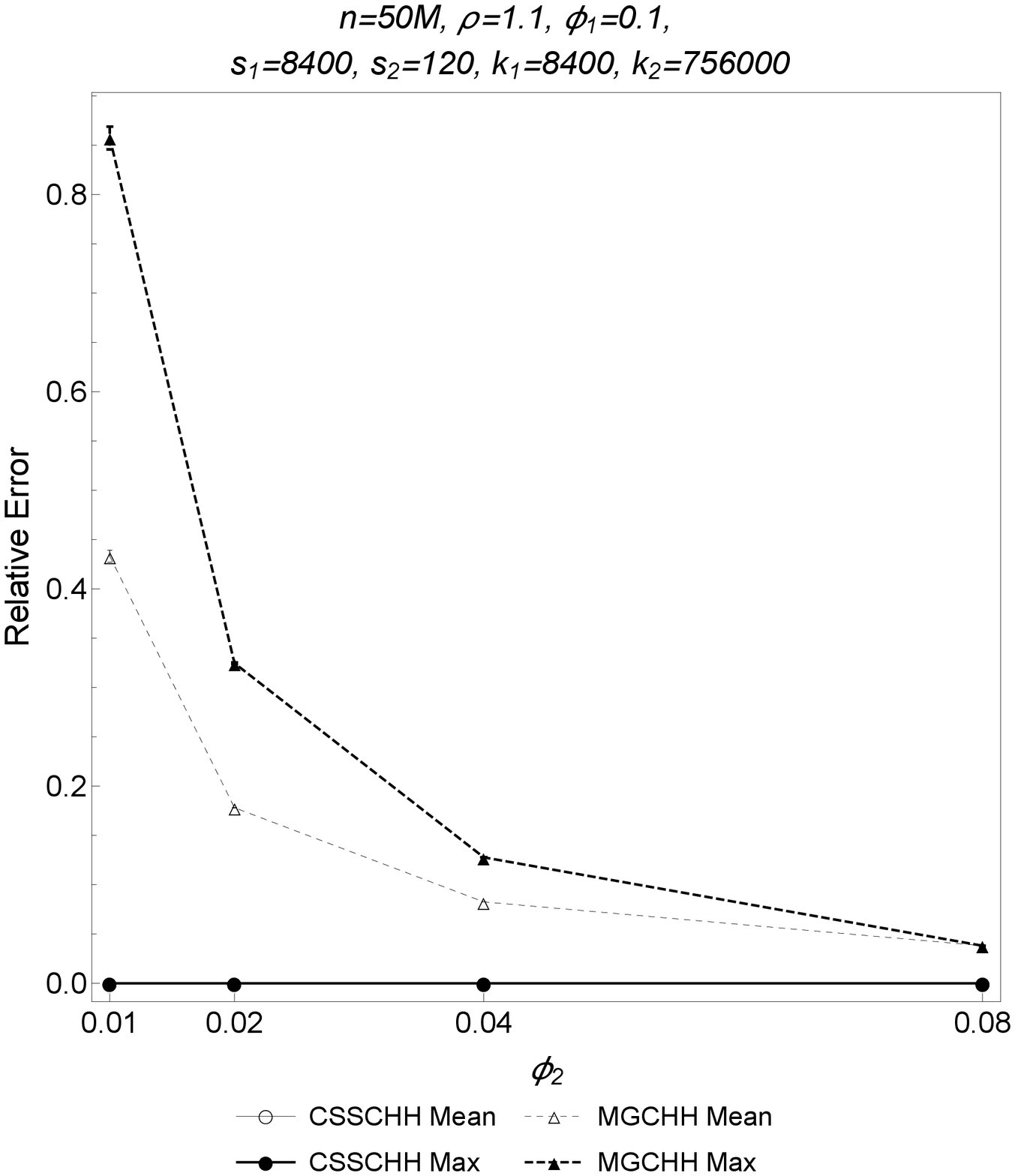}
           \label{phi1-relerr}
        } &

      \subfloat[$\phi_1 = 0.01$]{
           \includegraphics[width=0.3\textwidth]{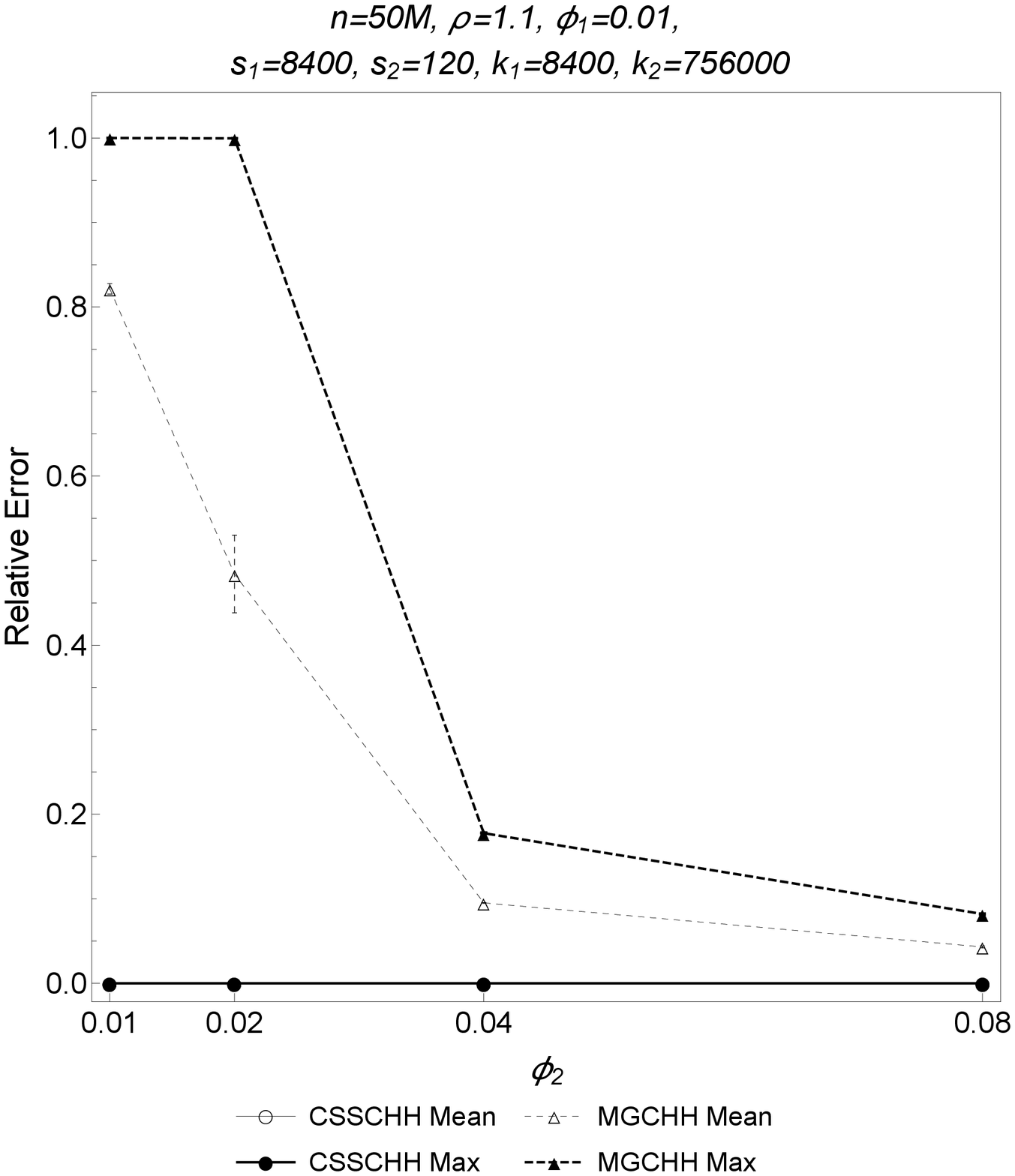}
           \label{phi2-relerr}
        } &
        
      \subfloat[$\phi_1 = 0.001$]{
           \includegraphics[width=0.3\textwidth]{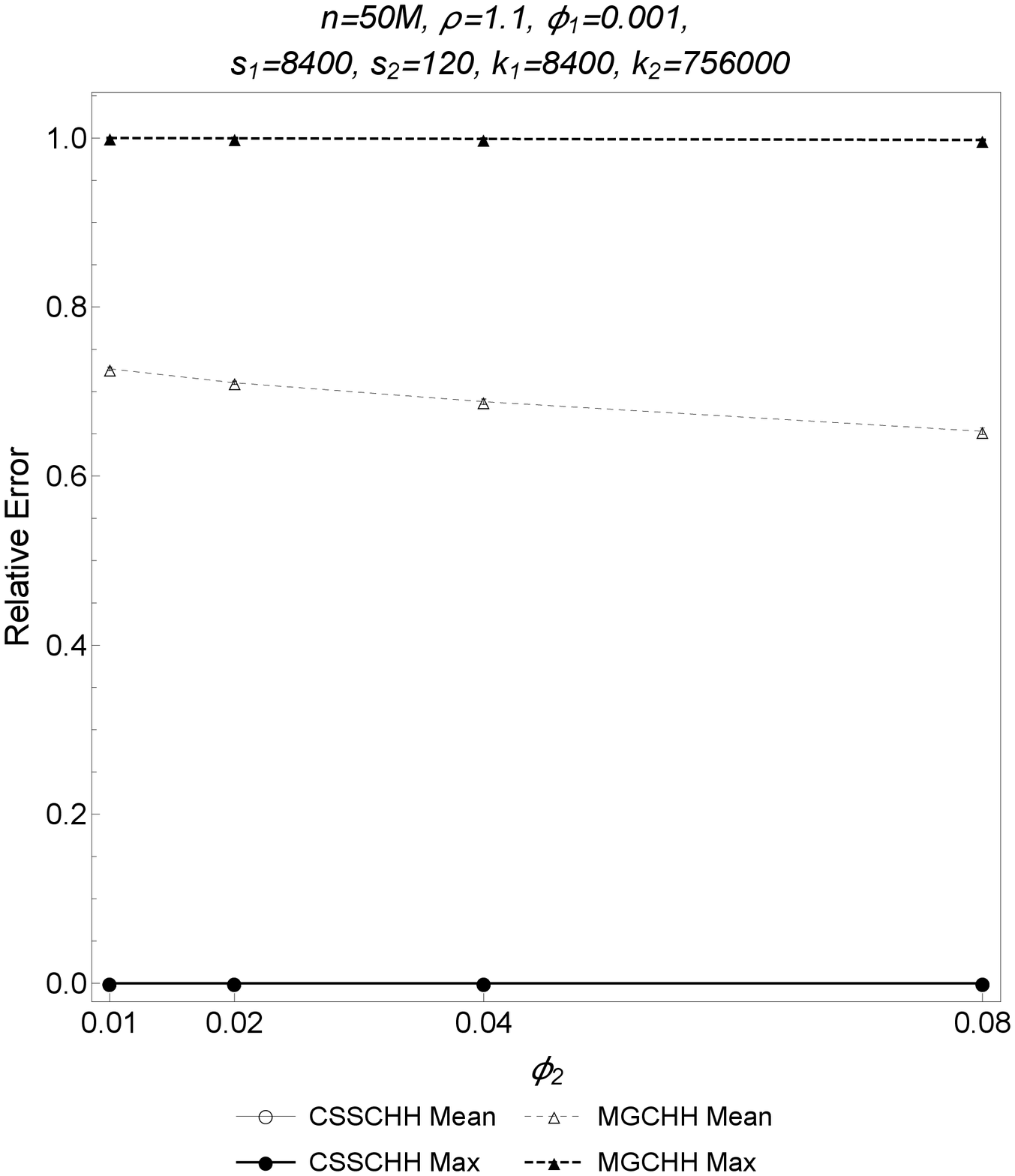}
           \label{phi3-Relerr}
        } 
\end{tabular}
 
 \caption{Relative Error (mean and confidence interval)} 
 \label{Rel_error}
\end{figure}

\begin{figure}[]
  \centering
  \begin{tabular}{ccc}
  
  \subfloat[varying $n$]{
           \includegraphics[width=0.3\textwidth]{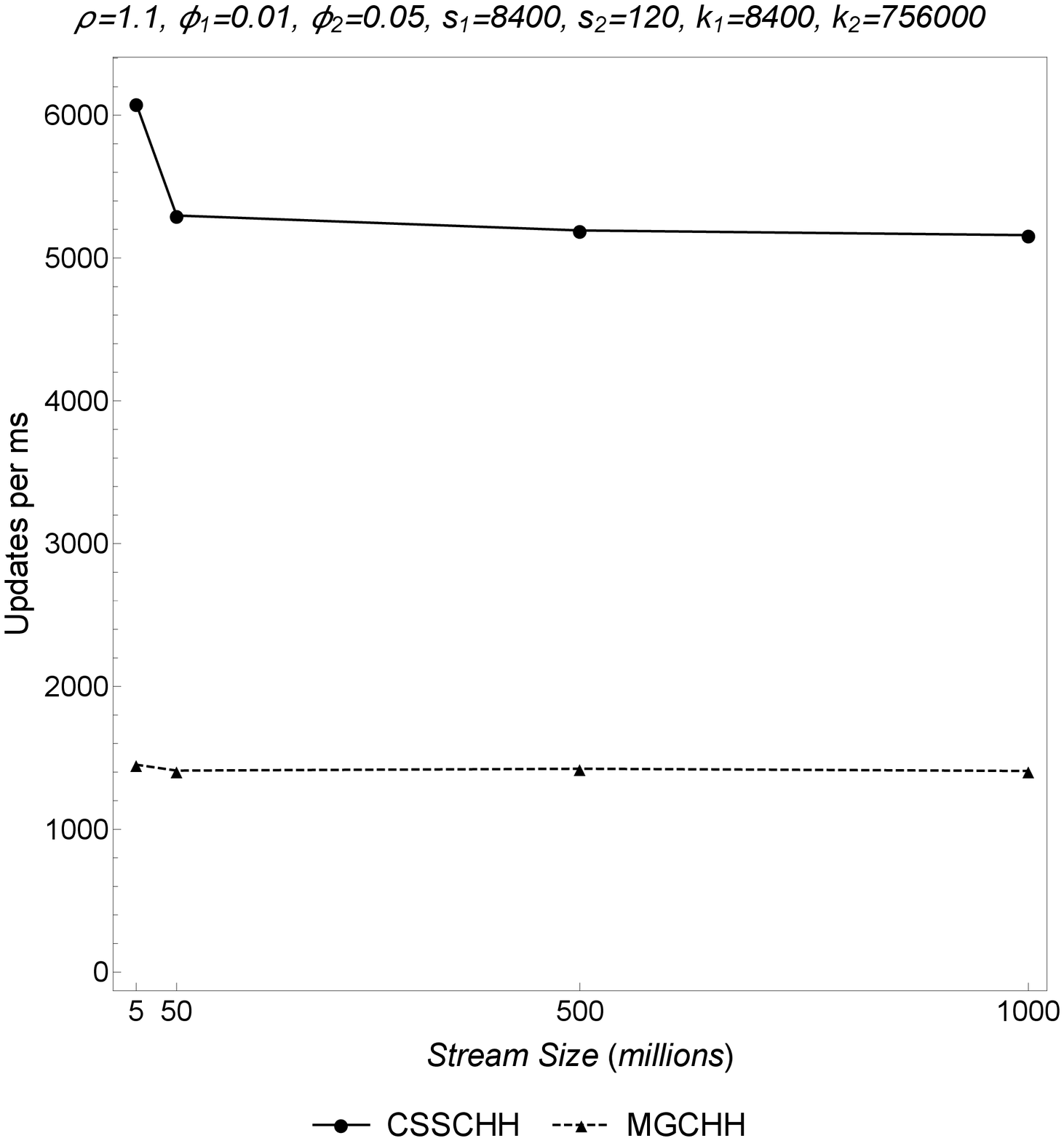}
           \label{ni-update}
        } &

      \subfloat[varying $\rho$]{
           \includegraphics[width=0.3\textwidth]{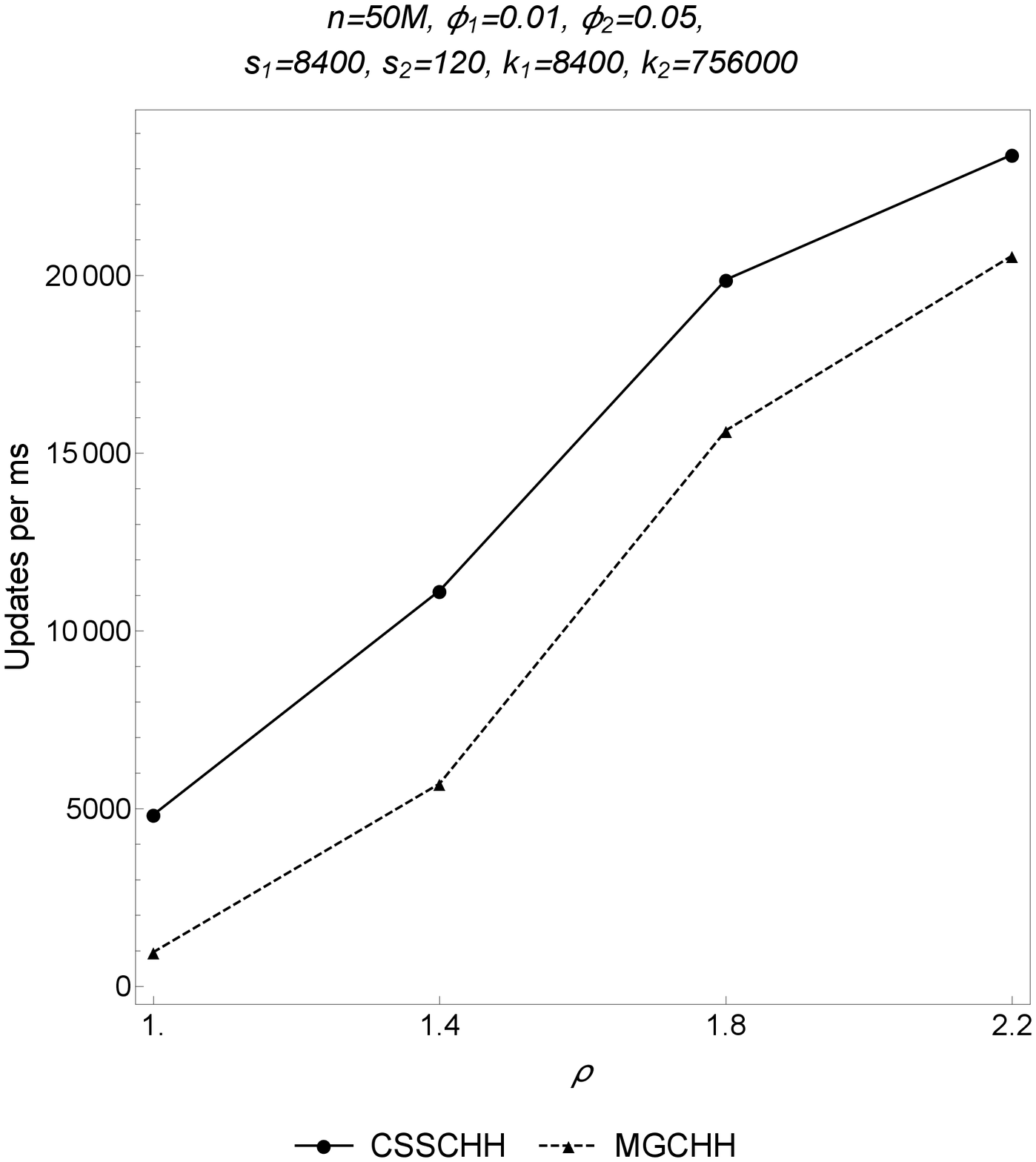}
           \label{sk-update}
        } &
        
      \subfloat[varying the space used ]{
           \includegraphics[width=0.3\textwidth]{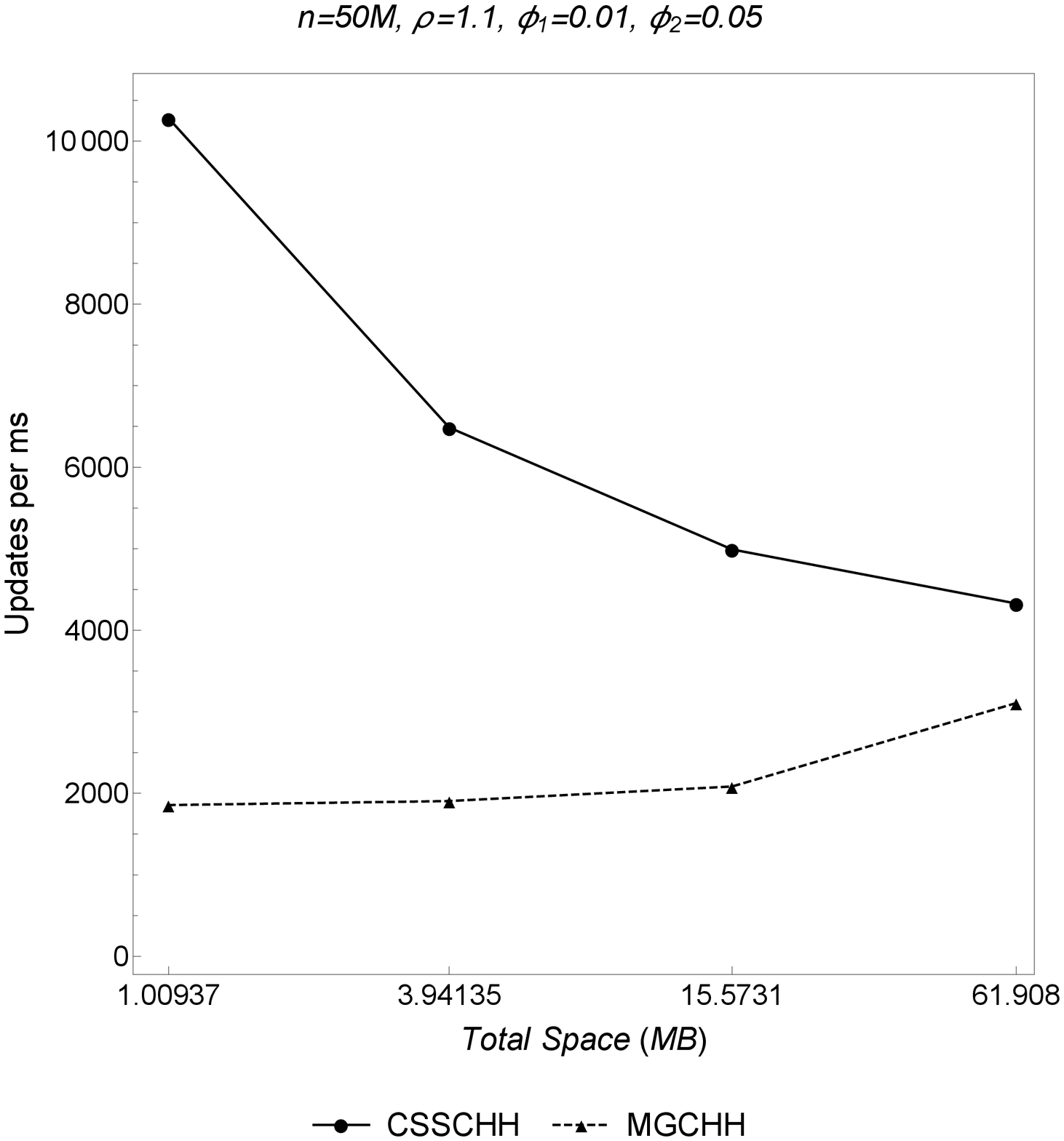}
           \label{size-update}
        } \\

        \subfloat[$\phi_1 = 0.1$]{
           \includegraphics[width=0.3\textwidth]{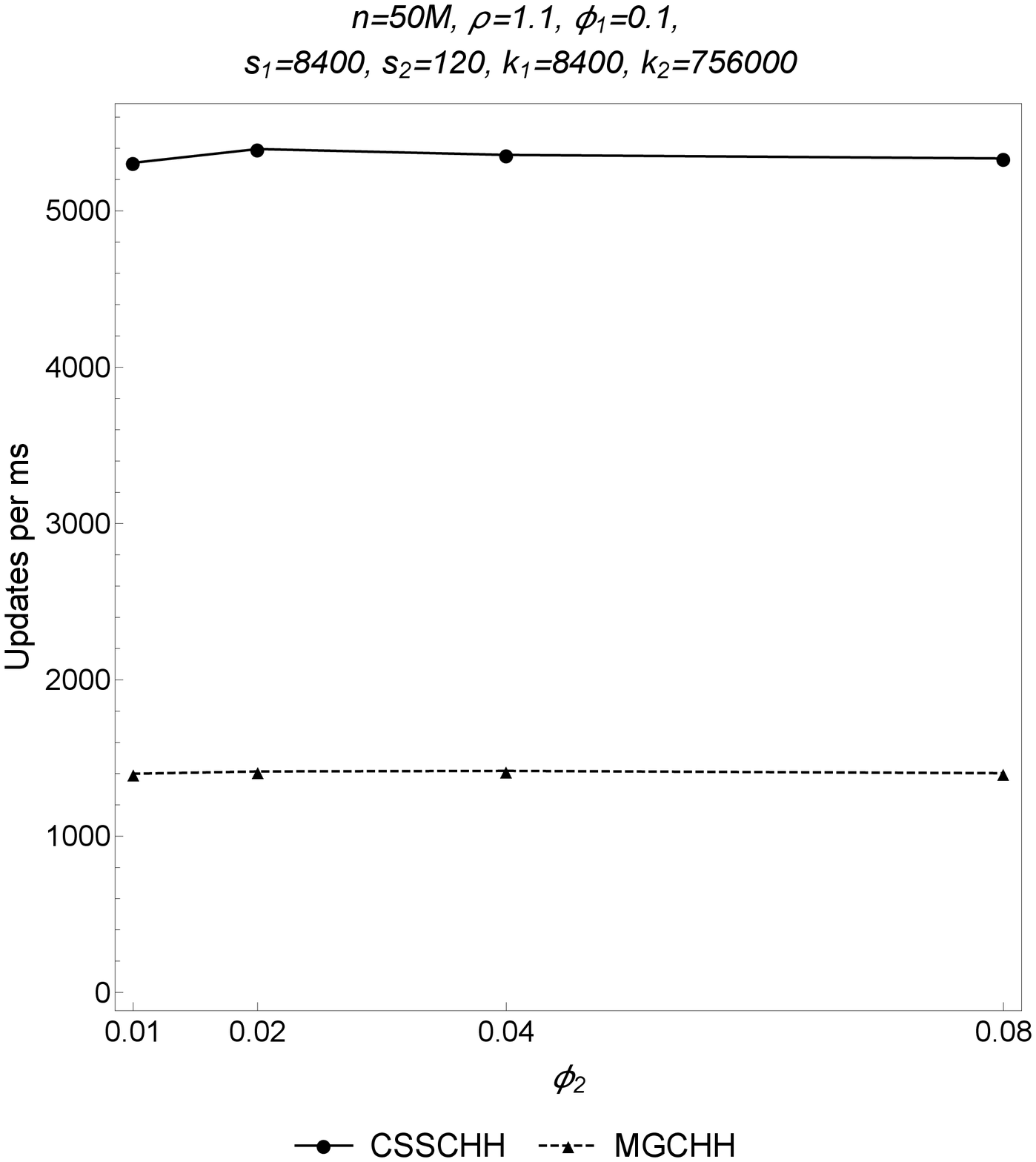}
           \label{phi1-update}
        } &

      \subfloat[$\phi_1 = 0.01$]{
           \includegraphics[width=0.3\textwidth]{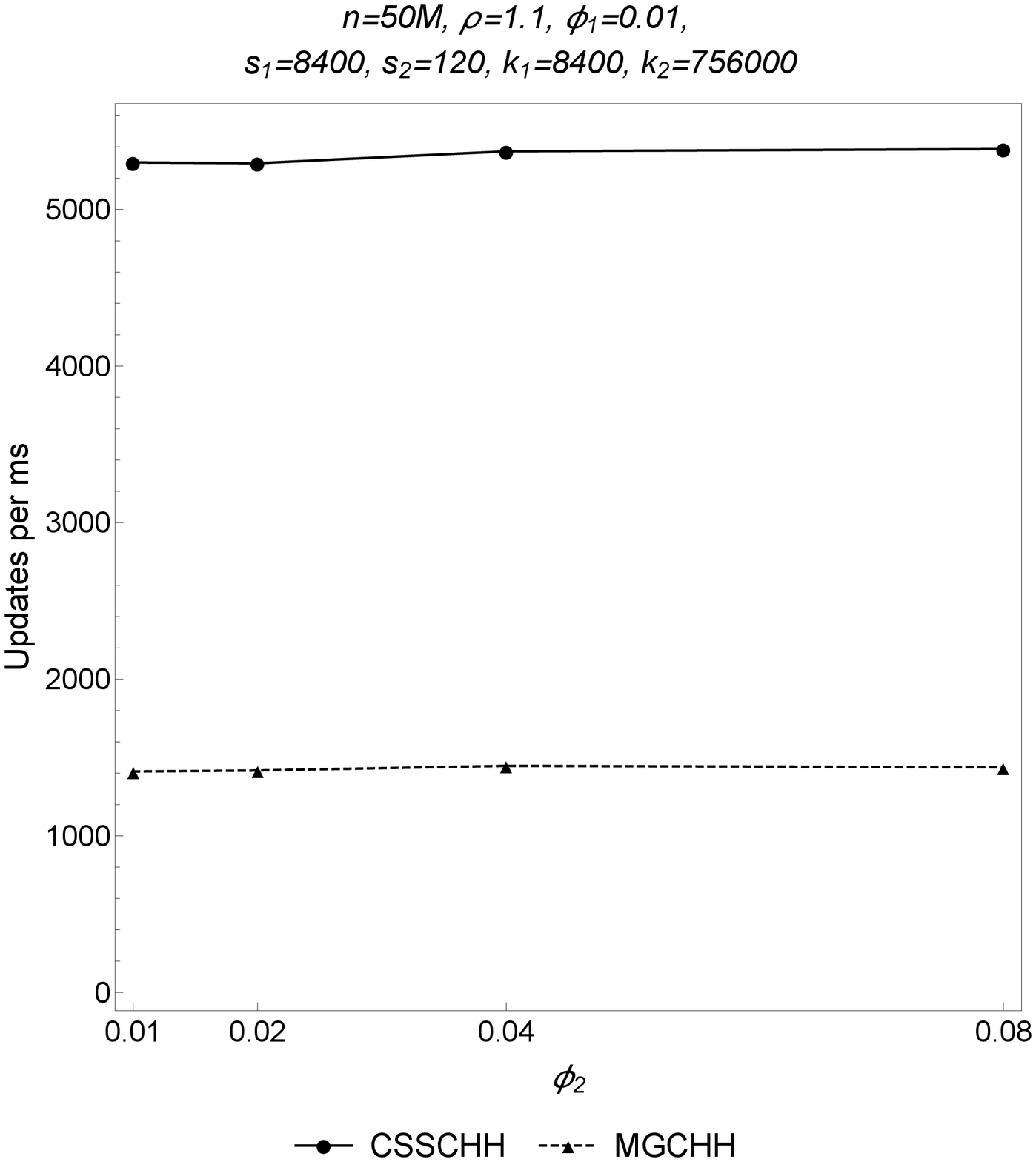}
           \label{phi2-update}
        } &
        
      \subfloat[$\phi_1 = 0.001$]{
           \includegraphics[width=0.3\textwidth]{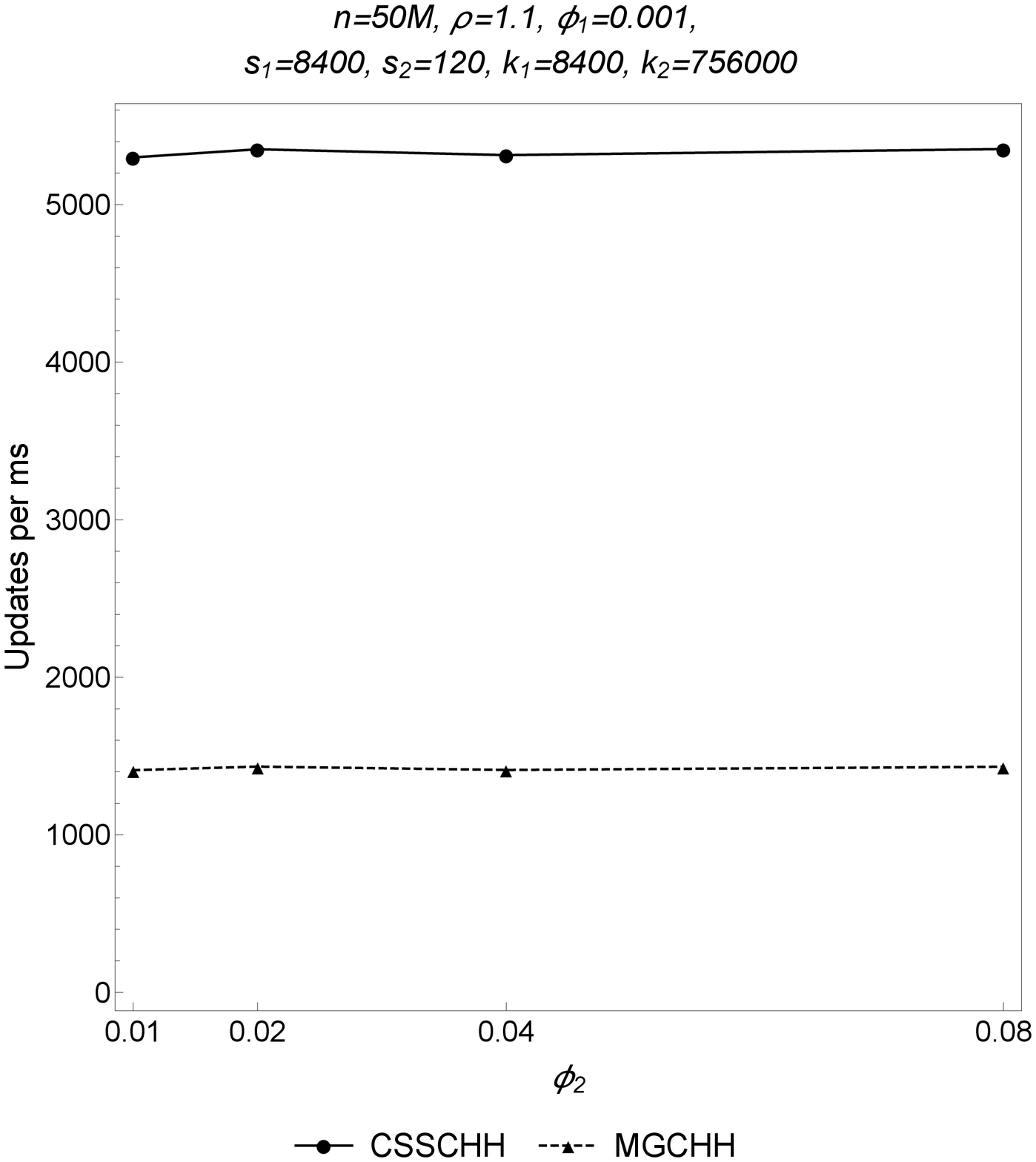}
           \label{phi3-update}
        } 
\end{tabular}
 
 \caption{Updates/ms (mean and confidence interval)} 
 \label{Updates}
\end{figure}

\begin{figure}[]
  \centering
  \begin{tabular}{cc}
  
  \subfloat[Precision]{
           \includegraphics[width=0.45\textwidth]{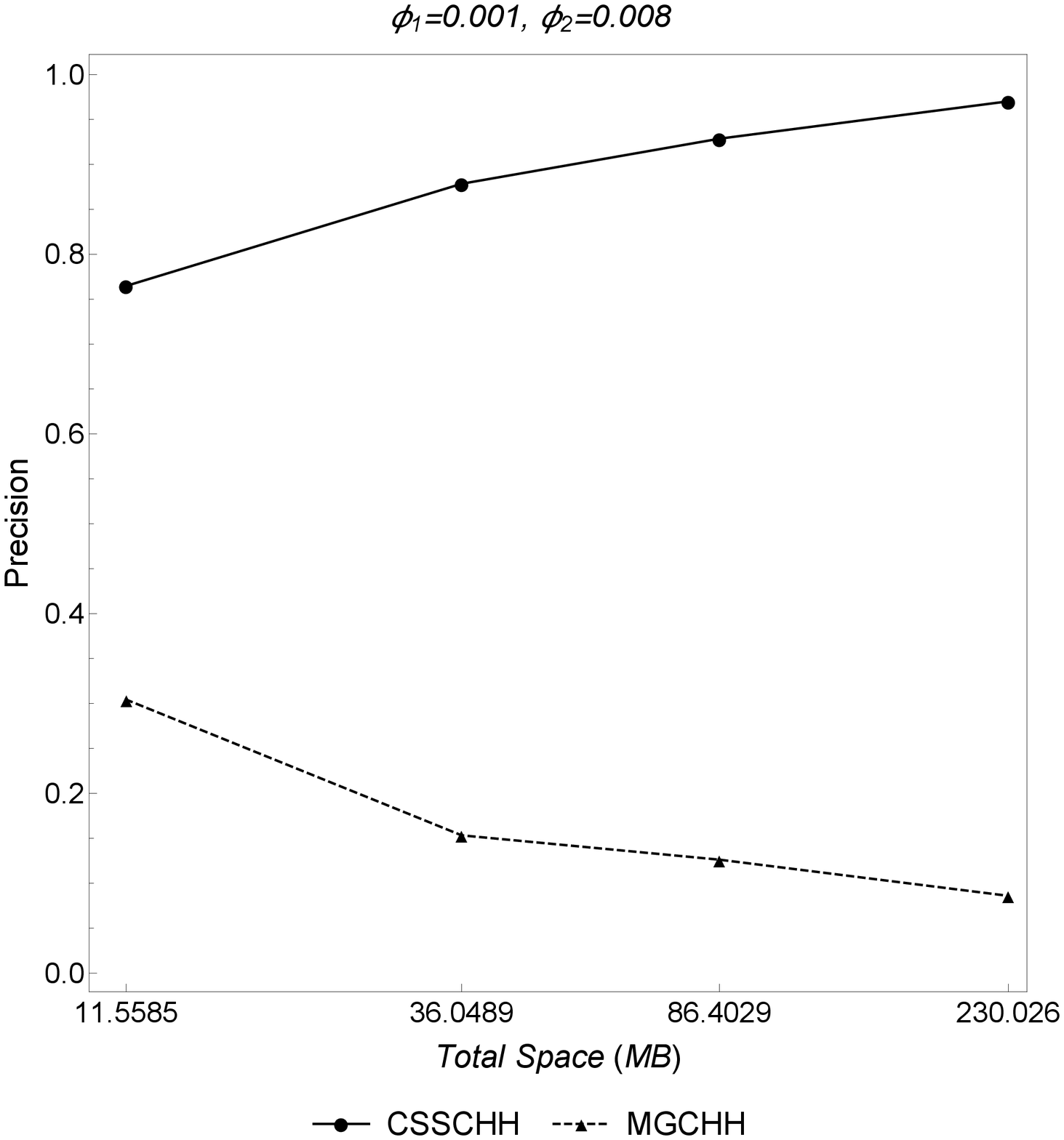}
           \label{wc-space-prec}
        } &

        \subfloat[Updates/ms]{
           \includegraphics[width=0.45\textwidth]{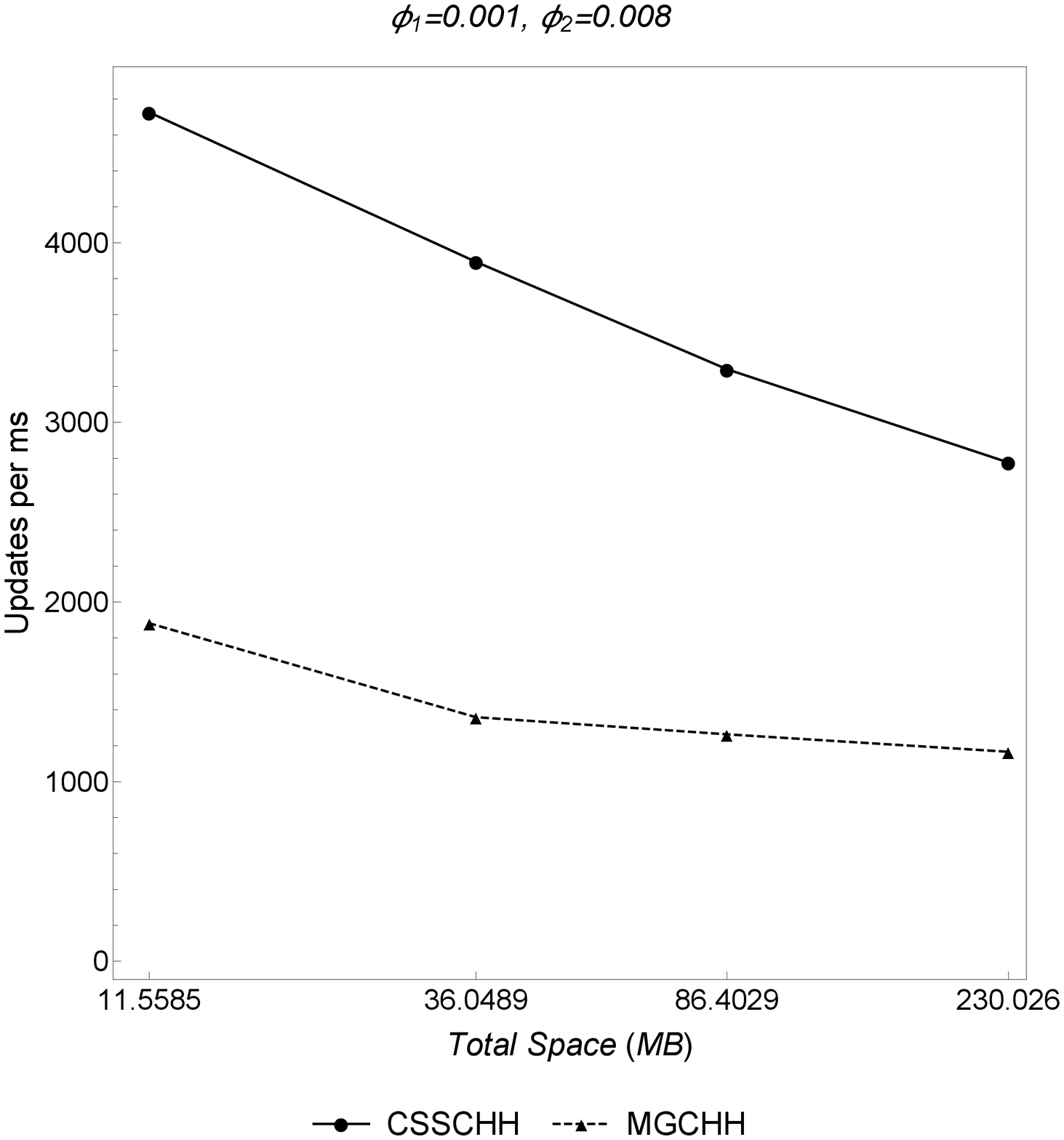}
           \label{wc-space-updates}
        } \\
        
      \subfloat[Relative Error]{
           \includegraphics[width=0.45\textwidth]{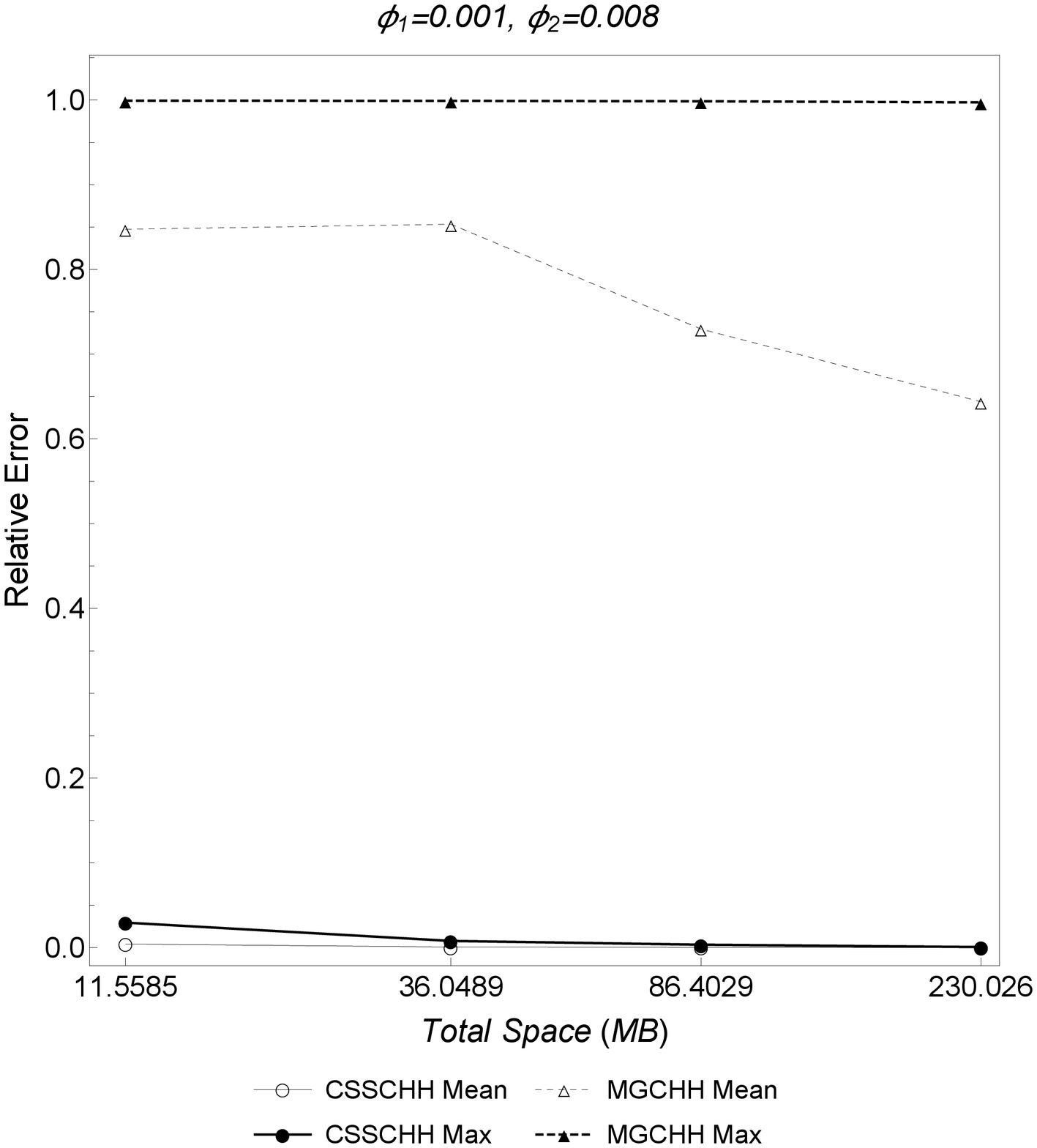}
           \label{wc-space-relerr}
        } &

      \subfloat[Absolute Error]{
           \includegraphics[width=0.45\textwidth]{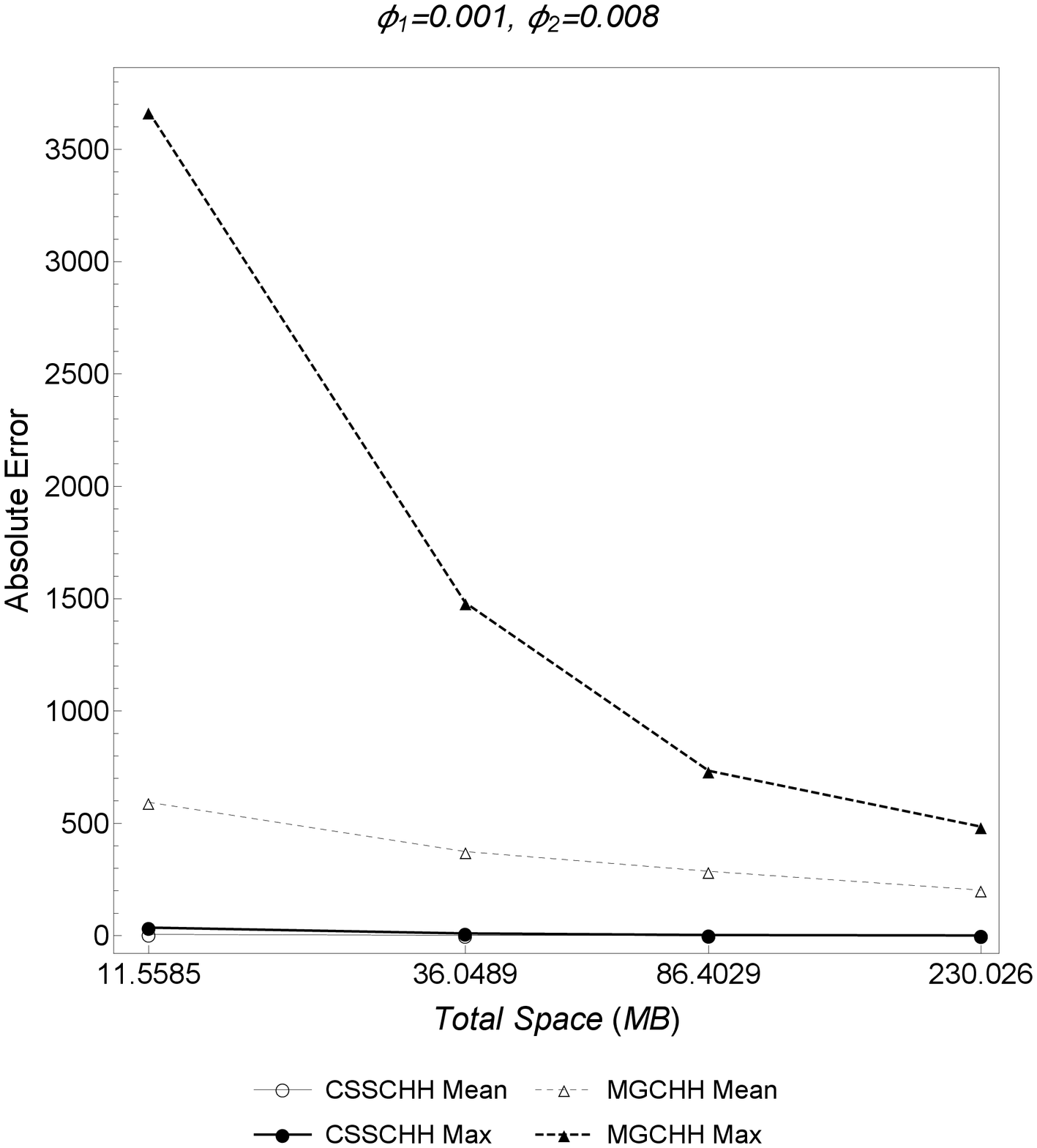}
           \label{wc-space-abserr}
        } \\

      \end{tabular}
 
 \caption{Worldcup'98 results obtained varying the space allowed (mean and confidence interval)} 
 \label{wc-space}
\end{figure}

\begin{figure}[]
  \centering
  \begin{tabular}{cc}
  
  \subfloat[Precision]{
           \includegraphics[width=0.45\textwidth]{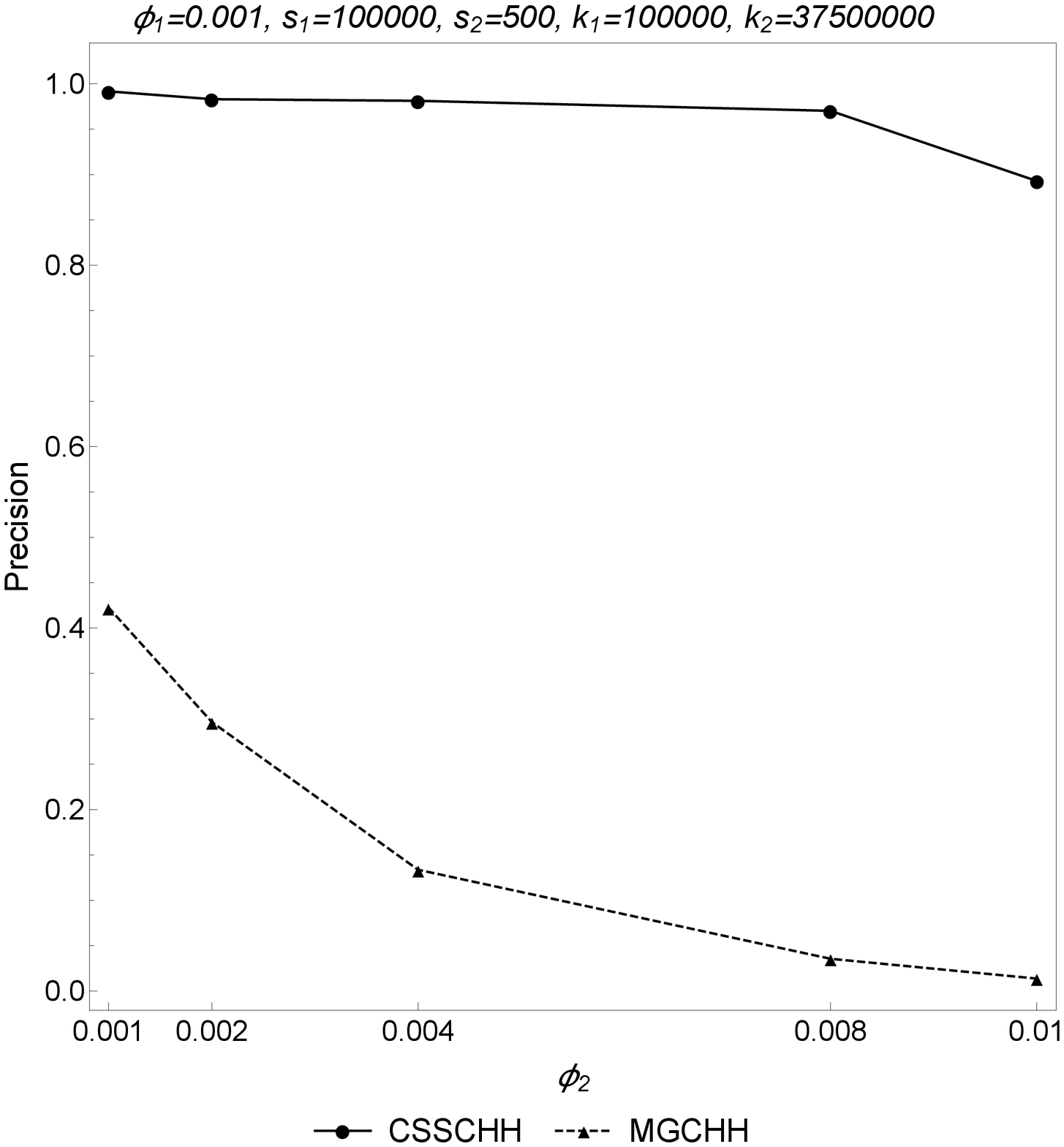}
           \label{wc-phi2-prec}
        } &

        \subfloat[Updates/ms]{
           \includegraphics[width=0.45\textwidth]{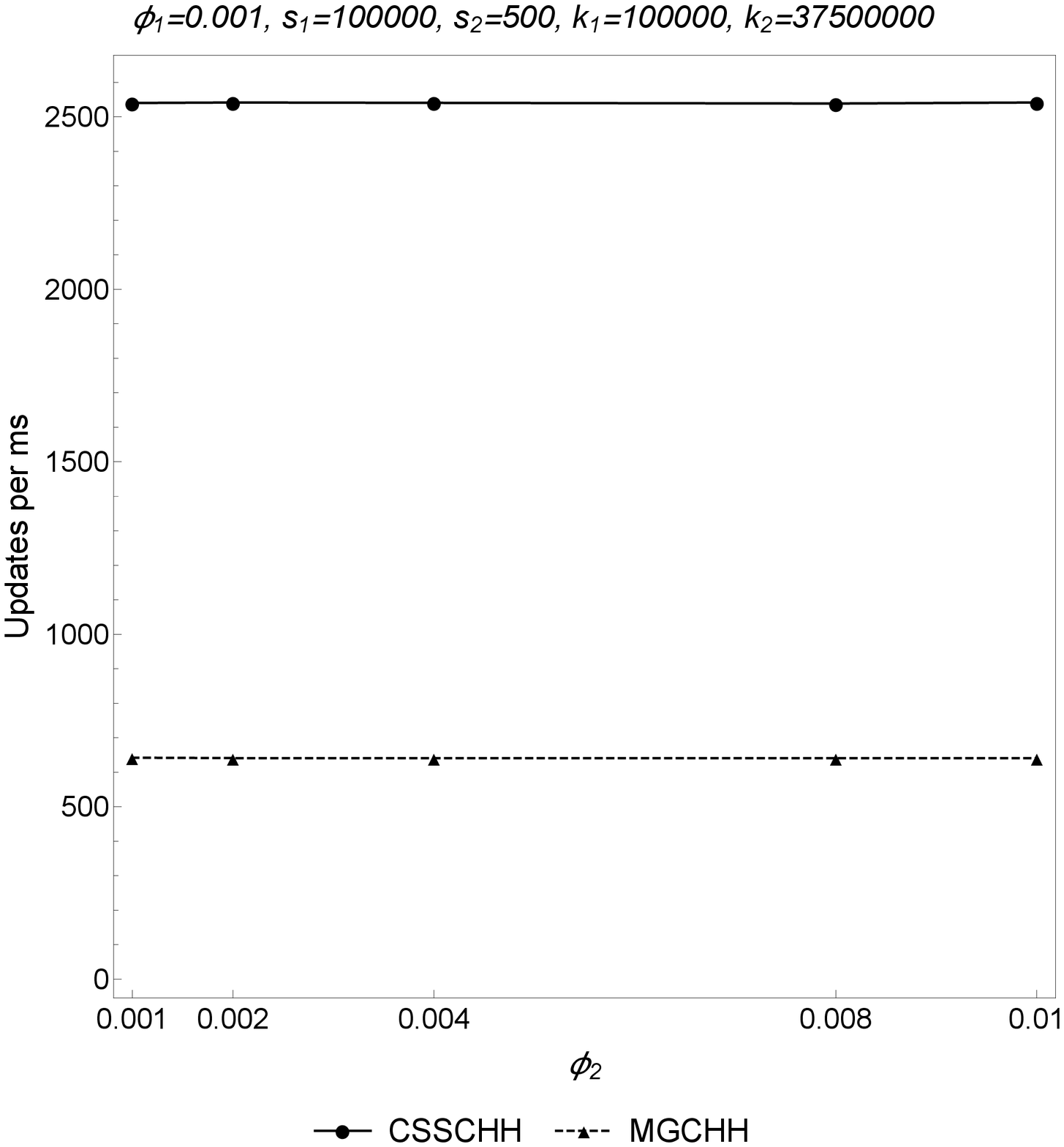}
           \label{wc-phi2-updates}
        } \\

      \subfloat[Relative Error]{
           \includegraphics[width=0.45\textwidth]{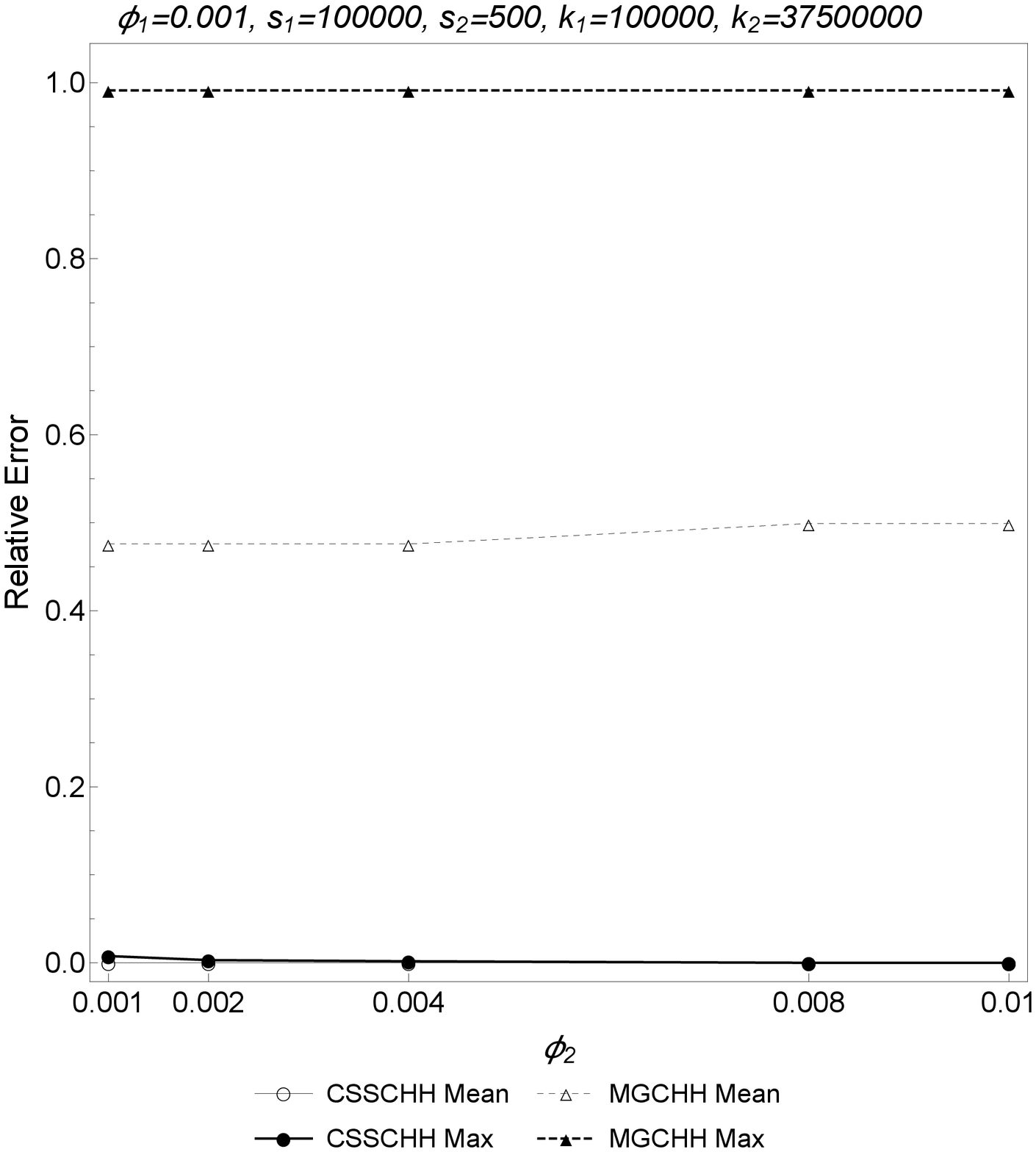}
           \label{wc-phi2-relerr}
        } &

      \subfloat[Absolute Error]{
           \includegraphics[width=0.45\textwidth]{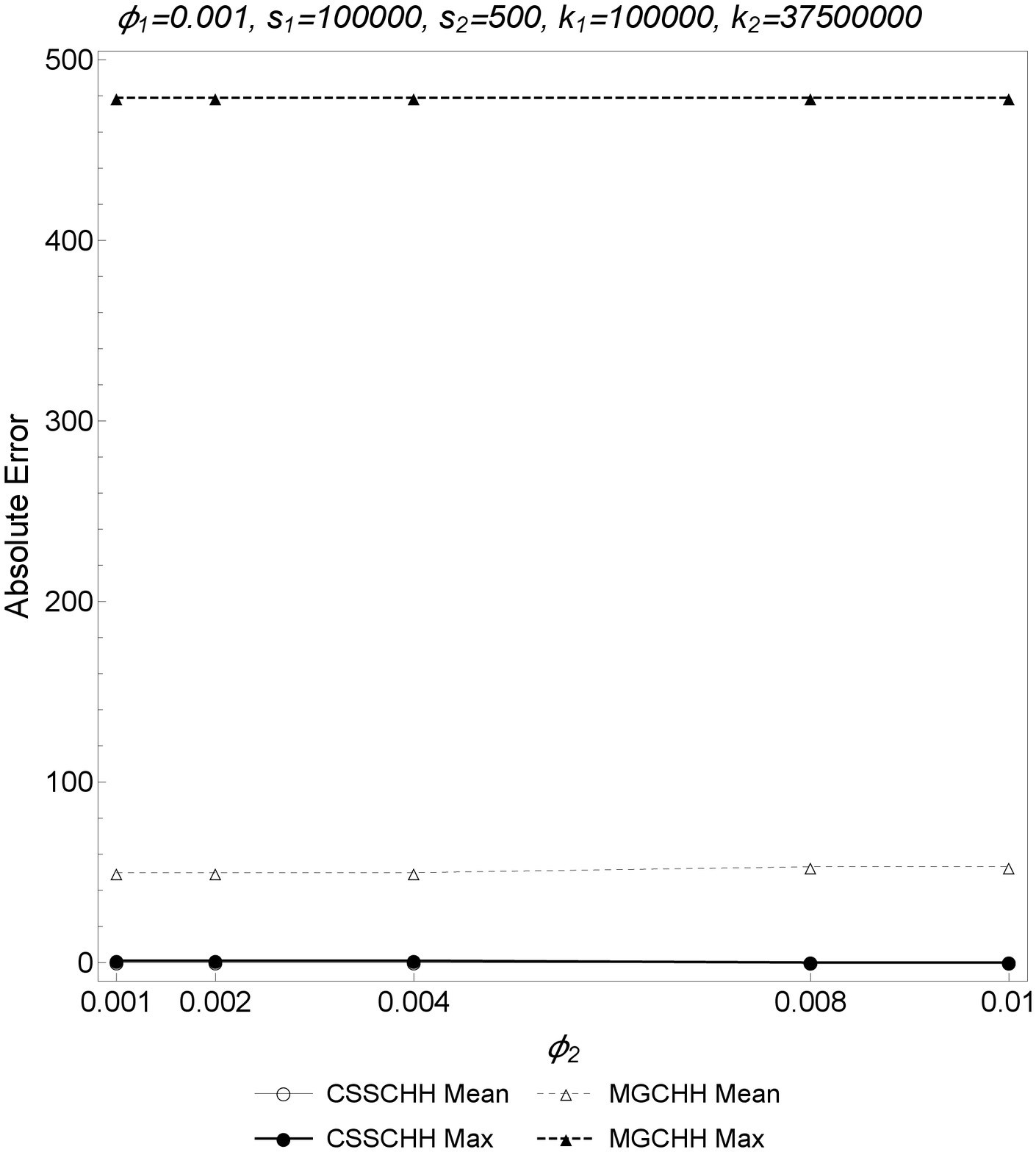}
           \label{wc-phi2-abserr}
        } \\
        
      \end{tabular}
 
 \caption{Worldcup'98 results obtained varying the $\phi_2$ threshold (mean and confidence interval)} 
 \label{wc-phi2}
\end{figure}

\section{Conclusions}
\label{conclusions}

In this paper, we have studied the problem of mining Correlated Heavy Hitters from a two-dimensional data stream. We have presented CSSCHH, a new counter-based algorithm for tracking CHHs, and formally proved its error bounds and correctness. We have compared our algorithm to MGCHH, a recently designed deterministic algorithm based on the Misra--Gries algorithm both from a theoretical point of view and through extensive experimental results, and we have shown that our algorithm outperforms it with regard to accuracy and speed whilst requiring asymptotically much less space. 

\clearpage

\bibliographystyle{plain}
\bibliography{bibliography}

\end{document}